\documentclass[aps,twocolumn,superscriptaddress,bibnotes,nofootinbib]{revtex4-2}

\usepackage{silence}
\usepackage{amsmath,mathtools,amsthm,amssymb}
\usepackage{algpseudocode}
\usepackage{array}
\usepackage{bbm}
\usepackage{bbold}
\usepackage{comment}
\usepackage{dsfont}
\usepackage{enumitem}
\usepackage{graphicx}
\usepackage{graphics}
\usepackage{mathdots}
\usepackage{mathrsfs}
\usepackage{physics}
\usepackage{pifont}
\usepackage{setspace}
\usepackage{tabularray}
\usepackage{tabularx}
\usepackage{tikz}
\usepackage{xcolor}
\usepackage{wrapfig}

\graphicspath{{figures/}}

\definecolor{elisacolor}{RGB}{230, 0, 90}

\definecolor{myrefcolor}{rgb}{0.067,0.5,0.5}
\definecolor{deepblue}{rgb}{0.1,0.1,0.8}
\definecolor{darkgreen}{rgb}{0.0,0.4,0.0}
\definecolor{richmaroon}{rgb}{0.6,0.0,0.0}
\definecolor{burgundy}{rgb}{0.5,0.0,0.13}
\definecolor{airforceblue}{rgb}{0.36,0.54,0.66}

\tikzset{
  cpdiag color 1/.style={fill=blue!25},
  cpdiag color 2/.style={fill=red!25},
  cpdiag color 3/.style={fill=green!25},
  cpdiag color 4/.style={fill=cyan!30!blue!15},
  cpdiag color 5/.style={fill=yellow!30},
}

\DeclareRobustCommand{\cpnum}[2]{%
  \tikz[baseline=(x.base)]{
    \node[
      #1,
      rounded corners=1pt,
      inner xsep=2pt,
      inner ysep=1pt,
      anchor=base
    ] (x) {$#2$};
  }%
}

\DeclareRobustCommand{\cpdiagdot}[1]{%
  \tikz[baseline=-0.55ex]{
    \node[
      circle,
      draw=black,
      line width=0.25pt,
      #1,
      inner sep=1.6pt
    ] {};
  }%
}

\colorlet{cpdiagwireone}{blue!55}
\colorlet{cpdiagwiretwo}{red!55}
\colorlet{cpdiagwirethree}{green!55}

\usepackage[
    breaklinks,
    colorlinks=true,
    linkcolor=myrefcolor,
    citecolor=myrefcolor,
    urlcolor=myrefcolor
]{hyperref}
\usepackage{orcidlink}

\usepackage[capitalize]{cleveref}

\newtheorem*{theorem*}{Theorem}
\newtheorem*{corollary*}{Corollary}
\newtheorem*{definition*}{Definition}

\newcounter{thm}
\newtheorem{theorem}[thm]{Theorem}
\newtheorem{lemma}[thm]{Lemma}

\newtheorem{proposition}[thm]{Proposition}

\newtheorem{corollary}[thm]{Corollary}

\newtheorem{example}{Example}

\theoremstyle{remark}

\DeclareMathOperator{\sym}{sym}

\newcommand{\HH}{\mathcal{H}}
\newcommand{\BB}{\mathcal{B}}

\newcommand{\proj}[1]{\ket{#1}\!\bra{#1}}
\newcommand{\Comm}{\mathrm{Com}}
\newcommand{\Wg}{\mathrm{Wg}}
\newcommand{\id}{\mathbb{I}}
\newcommand{\E}{\mathbb{E}}
\newcommand{\V}{\mathbb{V}}

\newcommand{\C}{\mathbb{C}}
\newcommand{\Span}{\operatorname{span}}

\newcommand{\be}{\begin{equation}\begin{aligned}\hspace{0pt}}
\newcommand{\ee}{\end{aligned}\end{equation}}
\newcommand{\bbb}{\begin{equation}\begin{aligned}}
\newcommand{\eee}{\end{aligned}\end{equation}}

\allowdisplaybreaks
\makeatletter
\renewcommand{\subsubsection}{\@startsection{subsubsection}{3}{0pt}%
  {1.5ex plus 1ex minus .2ex}%
  {1ex plus .2ex}%
  {\centering\bfseries}}
\makeatother

\begin{document}

\title{Colored Weingarten 
Calculus 
for Block-Unitary Ensemble
}

\author{Daniele Iannotti\orcidlink{0009-0009-0738-5998}}
\email{d.iannotti@ssmeridionale.it}
\affiliation{Scuola Superiore Meridionale, Largo S. Marcellino 10, 80138 Napoli, Italy}
\affiliation{INFN Sezione di Napoli, via Cintia, 80126 Napoli, Italy}
\affiliation{ Institut für Theoretische Physik, Universität zu Köln, Zülpicher Strasse 77, 50937 Köln, Germany}

\author{Elisa Vallini\orcidlink{0009-0003-4613-9526}}
\email{evallini@uni-koeln.de}
\affiliation{ Institut für Theoretische Physik, Universität zu Köln, Zülpicher Strasse 77, 50937 Köln, Germany}

\date{\today}

\begin{abstract}
Haar randomness is the standard null model for quantum typicality, scrambling, and random-matrix formulations of thermalization. In many physical settings, however, randomness is naturally constrained to subspaces of Hilbert space. This structure arises, in particular, both in systems with symmetry-resolved sectors and in the Eigenstate Thermalization Hypothesis (ETH), where mesoscopic energy windows define subspaces in which nearby energy eigenstates are mixed.
Motivated by these settings, we consider a Block-Unitary Ensemble consisting of independently Haar-random unitaries acting on each subspace. We develop a constructive Weingarten Calculus for this ensemble, providing a unified framework across quantum information and many-body physics. This construction reveals a symmetry-resolved Schur--Weyl duality, where the relevant commutant is not the ordinary permutation algebra, but the one of color-preserving permutations.
We apply this framework to quantum states under symmetry constraints and to energy-resolved ensembles for ETH.
\end{abstract}

\maketitle
\section{Introduction}\label{sec:introduction}
When a many-body problem becomes too complicated to solve microscopically, a recurring strategy in physics is to replace detailed microscopic information by a statistical description and ask which properties are typical or universal. This philosophy is central to statistical mechanics and reappears, in a more radical form, in quantum random states, matrices, and unitary ensembles. By deliberately ``mixing'' the microscopic degrees of freedom as thoroughly as possible, one obtains simple reference ensembles from which universal or typical features of otherwise intractable systems can be extracted. 

In quantum information and quantum many-body physics, Haar-randomness provides the canonical realization of this principle. It furnishes a universal reference ensemble for typical entanglement, underlying Page's law, canonical typicality, decoupling, randomized protocols, and random-circuit descriptions of scrambling and quantum chaos~\cite{Page1993,Goldstein2006,Popescu2006,HaydenWinter2008,Dankert2009,HarrowLow2009,BrandaoHarrowHorodecki2016}. 
The full unitary Haar-random ensemble represents the null model when no
structure is imposed beyond Hilbert-space dimension; however, physical systems are more restricted.
\begin{figure}
    \centering
    \includegraphics[width=0.9\linewidth]{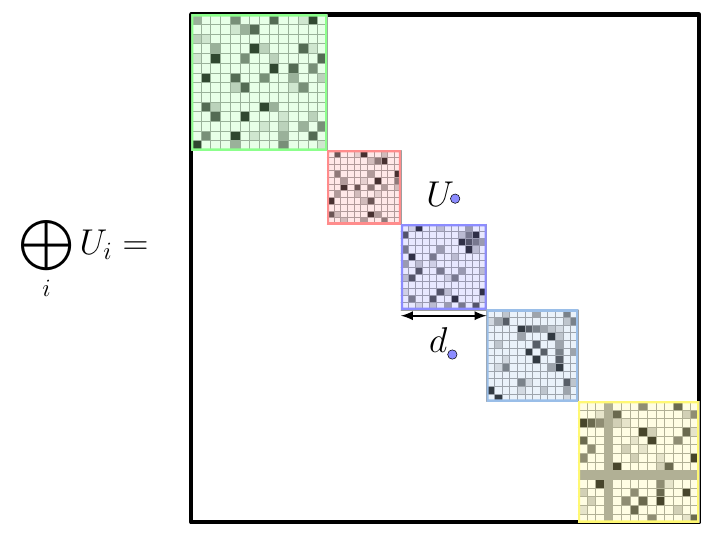}
\caption{Example of a block-unitary matrix $\bigoplus_{i=1}^m U_i$ with $m=5$ blocks, each associated with a distinct color, drawn from the Block-Unitary Ensemble $\mathcal{U}(\mathcal{H}|\boldsymbol M)$. The blocks $U_i$ are independent Haar-random unitary matrices, each of dimension $d_i$.}
    \label{fig:new_cover}
\end{figure}

Conserved quantities constrain the mixing of states across Hilbert space: commuting charges such as energy, particle number, or magnetization define simultaneous symmetry sectors, while non-Abelian symmetries give rise to a richer sector structure due to noncommuting conserved charges.~\cite{YungerHalpern2020Noncommuting}. This point is already central in typicality derivations of thermal states, where equilibrium emerges from a microcanonical subspace specified by sharp conserved quantities, and in resource-theoretic formulations of thermodynamics with constraints~\cite{LostaglioJenningsRudolph2016}. The same sector-resolved structure appears in generalized Page-curve calculations: when the algebra of observables has a nontrivial center, or when states are restricted to fixed symmetry sectors, the usual Haar formula is replaced by a sector-resolved expression whose mean and variance depend on the center and on the dimensions of the charge blocks~\cite{BianchiDona2019}.
Recent extensions to non-Abelian symmetry~\cite{Majidy_2023,Majidy_2023_2} show that this phenomenon is not merely a technical
modification of the Abelian case, but a structural refinement of typical
entanglement in the presence of symmetry~\cite{BianchiDonaKumar2024}.

The same philosophy also underlies statistical formulations of thermalization. The Eigenstate Thermalization Hypothesis (ETH) concerns the matrix elements of physical observables in the energy eigenbasis, whose random structure in chaotic many-body systems underlies the emergence of thermal expectation values~\cite{Deutsch_1991,Srednicki_1999,DAlessio_2016,Deutsch2018,FoiniKurchan2019}.
Importantly, no actual randomness is implied at the microscopic level, rather, chaotic eigenstates and matrix elements of simple observables behave statistically as if they arose from suitable random ensembles. 
Again, this effective randomness cannot be completely structureless. Energy provides a natural constraint: eigenstates far apart in energy cannot be described by the same statistical ensemble, suggesting that randomness should instead emerge locally in energy~\cite{Reimann2018}. This motivates energy-resolved random ensembles, in which nearby eigenstates are randomly mixed within mesoscopic energy windows~\cite{vallini2025refinementseigenstatethermalizationhypothesis}.

\begin{figure}[t]
    \centering
    \includegraphics[width=0.9\linewidth]{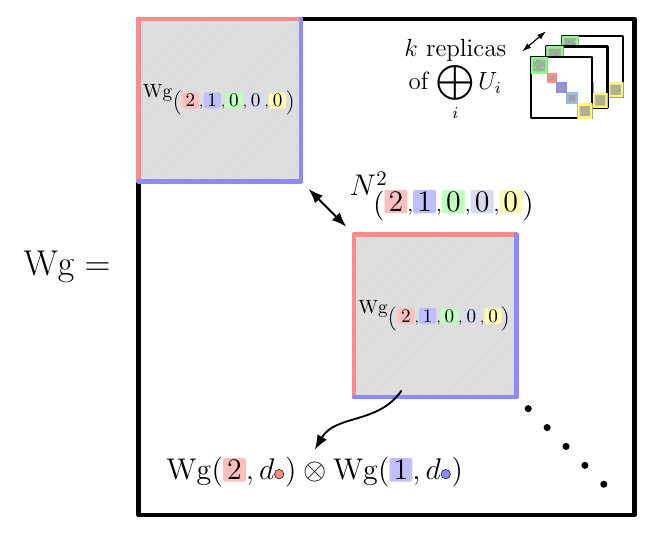}
\caption{Pictorial representation of the block-diagonal structure of the Weingarten matrix for the Block-Unitary Ensemble~\eqref{eq:WC_block_intro} for $k=3$ replicas (top right) and $m=5$ colors. The occupation vector $n =\{n_i\}_{i=1}^m$ specifies how many of the $k$ replicas are associated with each color (block). For each fixed $n$, the Weingarten matrix contains $N_{ n}^2$ equivalent blocks. Each block factorizes into ordinary unitary Weingarten matrices $\mathrm{Wg}(n_i,d_i)$, one for each occupied color. In the figure, we show two of the $N_{ n}^2=9$ blocks for
$n=  \bigl(
        \cpnum{cpdiag color 2}{2},
        \cpnum{cpdiag color 1}{1},
        \cpnum{cpdiag color 3}{0},
        \cpnum{cpdiag color 4}{0},
        \cpnum{cpdiag color 5}{0}
    \bigr) $, corresponding to two red replicas and one blue replica. The corresponding Weingarten block is factorized as $\mathrm{Wg}\!\left(
    \cpnum{cpdiag color 2}{2},
    d_{\cpdiagdot{cpdiag color 2}}
\right)
\otimes
\mathrm{Wg}\!\left(
    \cpnum{cpdiag color 1}{1},
    d_{\cpdiagdot{cpdiag color 1}}
\right)$. The dots on the diagonal indicate the blocks corresponding to all other occupation vectors $n$.}
    \label{fig:Wg}
\end{figure}

Although these windows have a different physical origin from exact symmetry sectors, the two settings share the same mathematical structure, which motivates moving beyond ordinary Haar moments over
$\mathcal{U}(\HH)$ and considering instead moments over the symmetry-compatible unitary group $\mathcal{U}(\HH|\mathbf{M})$, where $\mathbf{M}$ denotes the symmetry operator.
Its elements are block-diagonal in the eigenspaces of $\mathbf{M}$, with each block independently Haar distributed over the corresponding unitary group. We refer to this ensemble as the \emph{Block-Unitary Ensemble}:
\begin{equation}
    U  \in \mathcal{U}(\HH|\mathbf{M})\ , \quad \qquad  U = \bigoplus_{i=1}^m U_i \ .
\end{equation}
An instance is represented in Fig.~\ref{fig:new_cover}; we will refer to a block as a color.

When computing ordinary Haar moments over $\mathcal{U}(\HH)$, one can follow two different routes: either using analytical tools for direct Haar integration or relying on Schur--Weyl duality. The latter relates the collective action of the unitary group on multiple $k$ copies/replicas of $\HH$ to permutations of the copies, providing the structure underlying the Weingarten calculus~\cite{Weingarten1978,CollinsSniady2006,mele_introduction_2024}.
The usefulness of the Weingarten calculus extends far beyond the evaluation of Haar integrals themselves. Haar-moment techniques and their graphical formulations play a central role in the study of random quantum circuits, where they provide a systematic way of averaging over random unitary degrees of freedom~\cite{LiaoGalitski2022,Kostenberger2021,CollinsNechita2010}. This machinery has enabled analytical studies of entanglement growth~\cite{Nahum2017,Nahum2018}, operator spreading and out-of-time-order correlators~\cite{Nahum2018,vonKeyserlingk2018,Leone2021Isospectral,Fritzsch2026, Dowling2025}, spectral statistics in many-body quantum-chaotic systems~\cite{Chan2018,Chan2018Spectral}, as well as Hilbert-space delocalization and participation entropies~\cite{TurkeshiSierant2024}.

This raises the question addressed here: 

\emph{What replaces ordinary Schur--Weyl duality and Weingarten calculus when the Hilbert space is decomposed into subspaces on which independent unitary transformations act?}

In this work, we show that the block structure of the collective unitary action generalizes permutations of the replicas to \emph{color-preserving permutations}. Indeed, each of the $k$ replicas is associated with one of the Hilbert-space blocks, identified with a color, so that a configuration of replicas defines a colored word. The number of replicas of each color is encoded by the occupation vector $ n=\{n_i\}_{i=1}^m$. Permutations act on these colored words and can connect only replicas of the same color.
We denote by $D_{\boldsymbol A}$ the operator associated with a color-preserving permutation, indexed by $\boldsymbol A\in\mathcal C_k$. Then, the Block-Unitary Ensemble average of an operator $\mathcal{O}$ on $k$ copies of $\mathcal{H}$ is
\begin{equation}
\begin{split}
    \Phi_{\mathcal{U}(\mathcal{H}|\boldsymbol M)}^{(k)}(\mathcal{O})
    = \sum_{\boldsymbol A,\boldsymbol B\in\mathcal C_k}
    \Wg_{\boldsymbol A,\boldsymbol B}
    \Tr(D_{\boldsymbol A}^{\dagger}\mathcal O)
    D_{\boldsymbol B} \ .
\end{split}
\label{eq:WC_intro}
\end{equation}

Our main result is that the corresponding \emph{Weingarten matrix inherits a precise block-diagonal structure}. The blocks are labeled by the occupation vector $n$. For a fixed $n$, there are
$N_{n}=k!/\prod_{i=1}^{m}n_i!$ different colored words with the same occupation numbers, giving rise to $N_{n}^2$ equivalent blocks in the Weingarten matrix. Within each of these blocks, the Weingarten matrix factorizes into a tensor product of ordinary Weingarten matrices $\mathrm{Wg}(n_i,d_i)$ of the unitary group, one for each color. The full Weingarten matrix therefore takes the form
\begin{equation}
    \Wg
    = \bigoplus_{ n}
    \left(
    \mathbb{I}_{N_{n}^2}
    \otimes
    \bigotimes_{i=1}^{m} \mathrm{Wg}(n_i,d_i)
    \right) \ .
\label{eq:WC_block_intro}
\end{equation}
This structure is illustrated in Fig.~\ref{fig:Wg}.

The rest of the paper explains how this structure arises and develops its consequences. We begin in Sec.~\ref{sec_comm} by constructing the commutant of the collective
block-unitary action, which determines the elements $D_{\boldsymbol A}$. A characterization of this commutant was initiated by Hearth et al.~\cite{Hearth_2025}. Here, we provide an alternative constructive derivation in terms of color-preserving permutations.
This construction allows us to formulate the corresponding Weingarten
calculus explicitly in Sec.~\ref{sec_Weing}, turning block-resolved Haar
integration into a practical moment calculus. We then illustrate the broader scope of the Block-Unitary Ensemble through applications to random quantum states in Sec.~\ref{sec_quantumstates} and to energy-resolved random ensembles in the context of ETH in Sec.~\ref{sec_ETH}. We conclude with a discussion and outlook in Sec.~\ref{sec_conclusion}.

\section{Block-Unitary Ensemble's Commutant}
\label{sec_comm}
A convenient way to organize Haar averages is to first identify the commutant of the collective unitary action. The operators in this commutant are precisely the invariant structures that survive Haar averaging and therefore provide the building blocks for computing its moments.

One can define the $k$-commutant of a subgroup of the unitary group $\mathcal{G}\subset \mathcal{U}(\HH)$ as the set of bounded operators on $k$ replicas of $\HH$, namely
\begin{equation}
    \Comm_k(\mathcal{G})
    :=
    \{X\in\BB(\HH^{\otimes k})|[X,U^{\otimes k}]=0\,, \forall U\in \mathcal{G}\} .
    \label{eq:commutant_def}
\end{equation}
Thus, $\Comm_k(\mathcal{G})$ is the space of operators on $k$ replicas that are invariant under the diagonal action of the ensemble, and it is the image of the twirling map
\begin{equation}
\Phi_{\mathcal{G}}^{(k)}(\mathcal{O}):=\int_{\mathcal{G}} dU \,U^{\otimes k}\mathcal{O}(U^\dagger)^{\otimes k} \ ,
\end{equation}
where $dU$ denotes Haar measure for continuous groups and the uniform discrete measure for finite groups, while $\mathcal{O}$ is a bounded operator in $\mathcal{H}^{\otimes k}$. In fact, the map $\Phi_{\mathcal{G}}^{(k)}$ is the Hilbert--Schmidt projector onto $\Comm_k(\mathcal{G})$.

In this work, we study the commutant structure of the Block-Unitary Ensemble.
Let the Hilbert space be decomposed as
\begin{equation}
    \HH=\bigoplus_{i=1}^m \HH_i \ ,
\end{equation}
with $d=\dim\HH$, $d_i=\dim\HH_i$. For later use, we denote the projectors on each block as $\Pi_i:\HH\to\HH_i$.

We define the Block-Unitary Ensemble as
\begin{equation}
    \mathcal{U}(\mathcal{H}|\boldsymbol M)
    =
    \mathcal{U}(\HH_1)\times\cdots\times\mathcal{U}(\HH_m) \ ,
\end{equation}
with unitary representation
$\rho:\mathcal{U}(\mathcal{H}|\boldsymbol M)\to\mathcal{U}(\HH)$ defined by
\begin{equation}
    \rho(U_1,\ldots,U_m)
    =
    \bigoplus_{i=1}^m U_i \ .
\end{equation}
We henceforth identify $\mathcal{U}(\mathcal{H}|\boldsymbol M)$ with its image
$\rho(\mathcal{U}(\mathcal{H}|\boldsymbol M))\subseteq\mathcal{U}(\HH)$, which is a compact subgroup of the unitary group.

In order to properly define the commutant of $\mathcal{U}(\HH|\textbf{M})$ as generated by color-preserving permutations, we first define them.

\subsection{Color-preserving permutations}
We introduce the following notations.\\
We define as \emph{colored word} the vector 
\begin{equation}
    \boldsymbol{\alpha}=(\alpha_1,\dots,\alpha_k)\in\{1,\dots,m\}^k  \ ,
\end{equation} indicating the color chosen for each replica and as \emph{colored projectors} 
\begin{equation}
    \Pi_{\boldsymbol{\alpha}}
    :=
    \Pi_{\alpha_1}\otimes\cdots\otimes\Pi_{\alpha_k} \ ,
    \label{eq:word_projector}
\end{equation}
such that $\HH_{\boldsymbol{\alpha}}:=\Pi_{\boldsymbol{\alpha}}\HH^{\otimes k} \subseteq \HH^{\otimes k}$.\\
We can now define the \emph{occupation vector} of a certain colored word as
\begin{equation}
    n(\boldsymbol{\alpha})=(n_1(\boldsymbol{\alpha}),\dots,n_m(\boldsymbol{\alpha}))\ ,
\end{equation}
where $n_i(\boldsymbol{\alpha}):=\#\{r:\alpha_r=i\}$ indicates in how many replicas the color $i$ is chosen, with condition $\sum_i n_i = k$. The occupation numbers depend only on the number of replicas of each color, and not on their positions within the colored word. \\ 
Consider as an example for $k=5$ and $m\geq 3$,
\begin{equation}
    \boldsymbol{\alpha}
=
\bigl(
\cpdiagdot{cpdiag color 2},
\cpdiagdot{cpdiag color 1},
\cpdiagdot{cpdiag color 3},
\cpdiagdot{cpdiag color 2},
\cpdiagdot{cpdiag color 1}
\bigr)
\quad \alpha_1,\alpha_4= \cpdiagdot{cpdiag color 2}, \alpha_2,\alpha_5=\cpdiagdot{cpdiag color 1},
\alpha_3=\cpdiagdot{cpdiag color 3} \ .
\label{eq:ex_alpha}
\end{equation}
and 
\begin{equation}
    n\bigl(
        \cpdiagdot{cpdiag color 2},
        \cpdiagdot{cpdiag color 1},
        \cpdiagdot{cpdiag color 3},
        \cpdiagdot{cpdiag color 2},
        \cpdiagdot{cpdiag color 1}
    \bigr)
    =
    \bigl(
        \cpnum{cpdiag color 2}{2},
        \cpnum{cpdiag color 1}{2},
        \cpnum{cpdiag color 3}{1},
        \cpnum{cpdiag color 4}{0},
        \cdots,
        \cpnum{cpdiag color 5}{0}
    \bigr) \ .
\end{equation}
With this relabeling, we can decompose the $k$-fold Hilbert space in
\begin{equation}
    \HH^{\otimes k}
    \simeq
    \bigoplus_{\boldsymbol{\alpha}\in\{1,\dots,m\}^k}\HH_{\boldsymbol{\alpha}}
    \simeq
    \bigoplus_{\substack{n\in\mathbb{N}^m \\ |n|=k}}\HH_n\ ,
    \qquad
    \label{eq:occupation_decomp}
\end{equation}
with $\HH_n:=\bigoplus_{\boldsymbol{\alpha}:\,n(\boldsymbol{\alpha})=n}\HH_{\boldsymbol{\alpha}}$, which follows from the existence of a canonical isomorphism for finite $m$, with $\simeq$ denoting an isomorphism between the two spaces.
Since $U_i$ acts only on $\HH_i$, each $\HH_{\boldsymbol{\alpha}}$ is invariant under
$U^{\otimes k}$.

Let $S_k$ be the permutation group of $k$ elements that acts on $\HH^{\otimes k}$ by
\begin{equation}
    \V_\sigma
    \ket{x_1,\dots,x_k}
    =
    \ket{x_{\sigma^{-1}(1)},\dots,x_{\sigma^{-1}(k)}}\ ,
    \label{eq:permutation_operator}
\end{equation}
for $\ket{x_1},\dots,\ket{x_k} \in \C^d$. \\ 

Given a colored word $\boldsymbol{\alpha}$,
a permutation $\sigma\in S_k$ acts on its entries to produce another
colored word $\boldsymbol{\beta}$. We call $\sigma$ \emph{color-preserving}
if
\begin{equation}
   \alpha_r = \beta_{\sigma^{-1}(r)}\ ,
    \qquad r=1,\ldots,k \ .
\end{equation} 
With this setting one can prove the following theorem, where the generalized Schur--Weyl duality characterizes the commutant as the algebra generated by tensor products of sector projectors and permutations of the replicas~\cite{MarvianSpekkens2014}.

\begin{figure}
    \centering
    \includegraphics[width=0.9\linewidth]{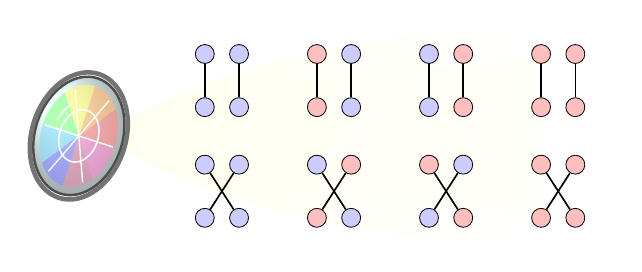}
    \caption{Diagrammatic representation of the elements of $\Comm_k(\mathcal{U}(\HH|\textbf{M}))$ for $k,m=2$. Each color-preserving permutation graph between two colored words is such that there are $k$ columns and $m$ possible colors for each filled circle.}
    \label{fig:icon_paper}
\end{figure}

\begin{theorem}[Commutant of the Block-Unitary Ensemble]
\label{thm:block_commutant}
Let $\mathcal{G}=\mathcal{U}(\HH|\textbf{M})$ act block diagonally on
$\HH=\bigoplus_i\HH_i$. Then the commutant is the span of color-preserving permutation diagrams between colored words
\begin{equation}
\small
    \Comm_k(\mathcal{G})
    =
    \Span\left\{
       \Pi_{\boldsymbol{\alpha}}\V_\sigma\Pi_{\boldsymbol{\beta}}:
        \sigma\in S_k,\,
        \alpha_r=\beta_{\sigma^{-1}(r)}\,
    \forall r
    \right\}\ ,
    \label{eq:sr_schur_weyl}
\end{equation}
with the colored projectors $\Pi_{\boldsymbol{\alpha}}$ and $\Pi_{\boldsymbol{\beta}}$ defined in~\eqref{eq:word_projector}, and the permutation operator $\V_\sigma$ in~\eqref{eq:permutation_operator}.
Equivalently,
\begin{equation}
\small
    \Comm_k(\mathcal{G})
    =
    \bigoplus_{\substack{n\in\mathbb{N}^m\\ |n|=k}}
    \Span
    \left\{
        \Pi_{\boldsymbol{\alpha}}\V_\sigma\Pi_{\boldsymbol{\beta}}:
        n(\boldsymbol{\alpha})=n(\boldsymbol{\beta})=n
    \right\}\ .
\end{equation}
\end{theorem}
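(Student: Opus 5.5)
We prove the two inclusions separately, working throughout in the block decomposition $\HH^{\otimes k}=\bigoplus_{\boldsymbol{\alpha}}\HH_{\boldsymbol{\alpha}}$.

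\emph{The spanning set lies in the commutant.} Fix $U=\bigoplus_i U_i$. Each $\Pi_i$ commutes with $U$, so $\Pi_{\boldsymbol{\alpha}}$ commutes with $U^{\otimes k}$. Also $\V_\sigma$ commutes with $U^{\otimes k}$ for every $U\in\mathcal U(\HH)$. Hence every product $\Pi_{\boldsymbol{\alpha}}\V_\sigma\Pi_{\boldsymbol{\beta}}$ commutes with $U^{\otimes k}$. We also use $\V_\sigma\Pi_{\boldsymbol{\beta}}\V_\sigma^\dagger=\Pi_{\sigma\cdot\boldsymbol{\beta}}$, where $(\sigma\cdot\boldsymbol\beta)_r=\beta_{\sigma^{-1}(r)}$. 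It gives $\Pi_{\boldsymbol{\alpha}}\V_\sigma\Pi_{\boldsymbol{\beta}}=\Pi_{\boldsymbol{\alpha}}\Pi_{\sigma\cdot\boldsymbol\beta}\V_\sigma$. This product is zero unless $\boldsymbol\alpha=\sigma\cdot\boldsymbol\beta$, which is exactly the color-preserving condition. So restricting to color-preserving $\sigma$ loses nothing.

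\emph{Every commutant element lies in the span.} Let $X\in\Comm_k(\mathcal G)$ and decompose $X=\sum_{\boldsymbol\alpha,\boldsymbol\beta}\Pi_{\boldsymbol\alpha}X\Pi_{\boldsymbol\beta}$. Each block $X_{\boldsymbol\alpha\boldsymbol\beta}:=\Pi_{\boldsymbol\alpha}X\Pi_{\boldsymbol\beta}$ is itself an intertwiner from $\HH_{\boldsymbol\beta}$ to $\HH_{\boldsymbol\alpha}$, because the projectors lie in the commutant.

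We first restrict to the center. The phases $U_i=e^{i\theta_i}\id_{d_i}$ act on $\HH_{\boldsymbol\beta}$ as $e^{i\sum_i n_i(\boldsymbol\beta)\theta_i}$. The intertwining relation therefore forces $X_{\boldsymbol\alpha\boldsymbol\beta}=0$ unless $n(\boldsymbol\alpha)=n(\boldsymbol\beta)$. This already yields the direct sum over $n$ in the second form, once the first form is established.

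Next, for $n(\boldsymbol\alpha)=n(\boldsymbol\beta)=n$, pick a permutation $\tau$ with $\tau\cdot\boldsymbol\beta=\boldsymbol\alpha$. Then $Y:=X_{\boldsymbol\alpha\boldsymbol\beta}\V_\tau^\dagger$ acts on $\HH_{\boldsymbol\alpha}\cong\bigotimes_i\HH_i^{\otimes n_i}$, after reordering tensor factors by a fixed permutation, and commutes with $\bigotimes_i U_i^{\otimes n_i}$ for all independent $U_i$.

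The commutant of a tensor product of independent groups acting on a tensor product space is the tensor product of the individual commutants. This follows from the bicommutant theorem, or by averaging each $U_i$ in turn, since the twirl over a product group is the composition of the commuting twirls. Ordinary Schur--Weyl duality then gives
\[
Y\in\bigotimes_i\Span\{\V_{\pi_i}:\pi_i\in S_{n_i}\}.
\]
Each $\bigotimes_i\V_{\pi_i}$ restricted to $\HH_{\boldsymbol\alpha}$ equals $\Pi_{\boldsymbol\alpha}\V_\pi\Pi_{\boldsymbol\alpha}$ for a color-preserving $\pi\in S_k$ that stabilizes $\boldsymbol\alpha$. Multiplying back by $\V_\tau$ expresses $X_{\boldsymbol\alpha\boldsymbol\beta}$ as a combination of $\Pi_{\boldsymbol\alpha}\V_{\pi\tau}\Pi_{\boldsymbol\beta}$, where $\pi\tau$ maps $\boldsymbol\beta$ to $\boldsymbol\alpha$. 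This is the claimed span.

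\emph{Main obstacle.} The delicate step is the reduction to the product group. One must check carefully that the reordering isomorphism $\HH_{\boldsymbol\alpha}\cong\bigotimes_i\HH_i^{\otimes n_i}$ intertwines the actions and conjugates products of per-color permutations into $\V_\pi$ with $\pi$ in the stabilizer of $\boldsymbol\alpha$. One must also justify the tensor-product-of-commutants fact. For the latter, I would use the twirl factorization $\Phi_{\mathcal G}=\Phi_{U_1}\circ\cdots\circ\Phi_{U_m}$ applied to matrix units, together with the Weingarten form of each single-color twirl. This shows the image lies in the tensor product of the permutation spans.
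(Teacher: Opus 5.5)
Your proposal is correct and follows the paper's proof: colored-word blocks, central phases $Z(\theta)$ to kill blocks with $n(\boldsymbol\alpha)\neq n(\boldsymbol\beta)$, reordering to $\bigotimes_i\HH_i^{\otimes n_i}$, factorization of the commutant, ordinary Schur--Weyl per color, mapping back to projected permutations, and the easy reverse inclusion. The differences are cosmetic: you transport $\HH_{\boldsymbol\beta}$ to $\HH_{\boldsymbol\alpha}$ with a fixed color-preserving $\V_\tau$ and compose with stabilizer permutations, where the paper uses two reorderings $T_{\boldsymbol\alpha},T_{\boldsymbol\beta}$ and builds $\sigma$ directly; and you justify the tensor-product-of-commutants step via twirl factorization or the bicommutant theorem, which the paper only asserts.
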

The dimension of the commutant for $d_i\geq k$ is
\begin{equation}
    \dim \Comm_k(\mathcal{U}(\HH|\textbf{M})) = k!\,m^k,
\end{equation}
since there are $m^k$ possible colored words $\boldsymbol{\alpha}$ and
$k!$ permutations for each word.\\

Graphically, each element of the commutant is represented
by two rows of $k$ colored vertices, as shown in Fig.~\ref{fig:icon_paper} for $k,m=2$. The upper row carries the output colored word $\boldsymbol{\alpha}$, while the lower row carries the input colored word $\boldsymbol{\beta}$. The permutation $\sigma$ is represented by connecting the lower vertex $r$ to the upper vertex $\sigma(r)$ for every $r$. The color-preserving condition simply states that every wire connects vertices of the same color.
If this condition is not satisfied, the corresponding operator vanishes.

We illustrate how the block structure modifies the ordinary permutation algebra through the minimal case of two copies. We first consider the standard Haar setting and then introduce a decomposition into two blocks.

\begin{example}[Two copies]
\label{example:two_qubits}
\begingroup
\newcommand{\cblue}{\cpdiagdot{cpdiag color 1}}
\newcommand{\cred}{\cpdiagdot{cpdiag color 2}}

Consider first the ordinary Haar action on two qubits. The singlet state
\begin{equation}
    \ket{\psi}
    :=
    \frac{\ket{01}-\ket{10}}{\sqrt{2}}
\end{equation}
can be characterized equivalently through the collective unitary action
and the permutation action. For any $U\in\mathcal U(\C^2)$,
\begin{equation}
    (U\otimes U)\ket{\psi}
    =
    \det(U)\ket{\psi}\ ,
    \label{eq:det}
\end{equation}
while under the swap $s=(12)$ and the identity $e=(1)(2)$
\begin{equation}
    \V_s\ket{\psi}
    =
    -\ket{\psi}\,, \quad  \V_e\ket{\psi}
    =
    +\ket{\psi}\ .
    \label{eq:s_e_2qubit}
\end{equation}
This is the simplest manifestation of the usual Schur--Weyl duality:
the collective Haar-unitary action and the permutation action provide
two complementary descriptions of the invariant decomposition of
$(\C^2)^{\otimes 2}$. \\ 

We now contrast this with the Block-Unitary Ensemble.
Let $\HH=\C^2\otimes \C^2=\HH_{\cblue}\oplus \HH_{\cred}$ such that
\begin{equation}
    \HH_{\cblue}=\Span\{\ket{00},\ket{01}\},\,
    \HH_{\cred}=\Span\{\ket{10},\ket{11}\}.
\end{equation}
A block-unitary has the form
\begin{equation}
    U=
    \ket{0}\bra{0}\otimes U_{\cblue}
    +
    \ket{1}\bra{1}\otimes U_{\cred}
    =\begin{pmatrix}
U_{\cblue} & 0 \\
0 & U_{\cred}
\end{pmatrix},
\end{equation}
and $U_{\cblue},U_{\cred}\in \mathcal{U}(\mathbb{C}^2)$.
Consider the two-replica state
\begin{equation}
    \ket{\psi_{\cblue}}
    :=
    \frac{
    \ket{00}\otimes\ket{01}
    -
    \ket{01}\otimes\ket{00}
    }{\sqrt 2}
    \in
    \HH_{\cblue}\otimes\HH_{\cblue}\ ,
\end{equation}
since both replicas lie in the $\cblue$ sector,
\begin{equation}
    (U\otimes U)\ket{\psi_{\cblue}}
    =
    \det(U_{\cblue})\ket{\psi_{\cblue}} \ .
\end{equation}
On the other hand, for the replica swap with $s=(12)$,
\begin{equation}
    \V_s^{\cblue}\ket{\psi_{\cblue}}
    := \Pi_{(\cblue,\cblue)}\V_s\Pi_{(\cblue,\cblue)}
    \ket{\psi_{\cblue}}=
    -\ket{\psi_{\cblue}} \ .
\end{equation}
The same holds for $\HH_{\cred}\otimes\HH_{\cred}$.\\
Consider now a mixed-color two-replica state. For $a,b\in\{0,1\}$, define a state in $\left(\HH_{\cblue}\otimes\HH_{\cred}\right) \oplus\left(\HH_{\cred}\otimes\HH_{\cblue}\right)$
\begin{equation}
    \ket{\psi_{\cblue\cred}(a,b)}
    :=
    \frac{
    \ket{0a}\otimes\ket{1b}
    -
    \ket{1b}\otimes\ket{0a}
    }{\sqrt 2}\ .
\end{equation}
One can then ask if this state is an eigenstate of two operators, similarly to the two-qubit case for the swap and the identity in Eq.~\eqref{eq:s_e_2qubit}.
The identity can be written as
\begin{equation}
    \V_e^{\cblue\cred}
    :=
    \Pi_{(\cred,\cblue)}\V_e\Pi_{(\cred,\cblue)}+\Pi_{(\cblue,\cred)}\V_e\Pi_{(\cred,\cblue)}\,,
\end{equation}
whereas the corresponding color-resolved swap is
\begin{equation}
    \V_s^{\cblue\cred}
    :=
    \Pi_{(\cred,\cblue)}\V_s\Pi_{(\cblue,\cred)}
    +
    \Pi_{(\cblue,\cred)}\V_s\Pi_{(\cred,\cblue)}\  ,
\end{equation}
which acts as
\begin{equation}
    \V_s^{\cblue\cred}
    \ket{\psi_{\cblue\cred}(a,b)}
    =
    -\ket{\psi_{\cblue\cred}(a,b)} \ .
\end{equation}
Under the block-unitary action $U\otimes U$, the mixed-color antisymmetric states generally mix among themselves, while their span remains invariant. To see this, first note that the action on a single copy is
\begin{equation}
\begin{aligned}
U\ket{0a} &=\sum_{a'=0}^1 (U_{\cblue})_{a'a}\ket{0a'},\\
U\ket{1b} &=\sum_{b'=0}^1 (U_{\cred})_{b'b}\ket{1b'},
\end{aligned}
\end{equation}
and as we did in Eq.~\eqref{eq:det} with the state $\ket{\psi_{\cblue\cred}(a,b)}$, we have
\begin{equation}
(U\otimes U)
\ket{\psi_{\cblue\cred}(a,b)}
=
\sum_{a',b'=0}^1
(U_{\cblue})_{a'a}
(U_{\cred})_{b'b}
\ket{\psi_{\cblue\cred}(a',b')}.
\end{equation}
Thus, $(U\otimes U)\mathcal A^-_{\cblue\cred}\subseteq\mathcal A^-_{\cblue\cred}$, where
$\mathcal A^-_{\cblue\cred}:=\Span\{\ket{\psi_{\cblue\cred}(a,b)}:a,b\in\{0,1\}\}$.
Unlike the same-color singlet, an individual mixed-color antisymmetric state is generally not mapped to a scalar multiple of itself.

\endgroup
\end{example}

\section{Symmetry-resolved Weingarten calculus}
\label{sec_Weing}
The above  result allows us to express the moment operator in terms of the so-called symmetry-resolved Weingarten coefficients
\begin{equation}
\begin{split}
    \Phi_{\mathcal{U}(\HH|\textbf{M})}^{(k)}(\mathcal{O})&= \sum_{\boldsymbol{A},\boldsymbol{B} \in \mathcal{C}_k} \Wg_{\boldsymbol{A},\boldsymbol{B}} \Tr(D_{\boldsymbol{A}}^\dagger \mathcal{O}) D_{\boldsymbol{B}} \ ,
\end{split}
\label{eq:WC}
\end{equation}
with 
\begin{equation}
    D_{\boldsymbol{A}}
    :=
    \Pi_{\boldsymbol{\alpha}}
    \V_\sigma
    \Pi_{\boldsymbol{\beta}}
    \in \Comm_k(\mathcal{U}(\HH|\textbf{M}))\ ,
    \quad
    \boldsymbol{A}
    =
    (\boldsymbol{\alpha},\boldsymbol{\beta};\sigma) \ ,
\end{equation} 
and let $\mathcal{C}_k$ denote the set of color-preserving permutation triples indeces.

The symmetry-resolved Weingarten coefficients are determined by the overlaps
between the commutant basis elements, encoded in the Gram matrix
\begin{equation}
G_{\boldsymbol{A},\boldsymbol{B}} :=\Tr\!\left(D_{\boldsymbol{A}}^\dagger D_{\boldsymbol{B}}\right).
\end{equation}
The corresponding Weingarten matrix in Eq.~\eqref{eq:WC} is then given by the Moore--Penrose pseudoinverse when the commutant basis is linearly dependent. In the regime in which all Gram blocks are invertible, the pseudoinverse reduces to the ordinary inverse. Our next goal is to give both the Gram matrix and the resulting Weingarten coefficients an explicit combinatorial and graphical representation in terms
of color-preserving permutation diagrams.

\subsection{Gram Matrix}
\label{sec_Gram}
In order to compute the Gram matrix we introduce the following notation.
For a colored word $\boldsymbol{\alpha}$ 
and for each color $i=1,\ldots,m$, we define the set of positions occupied by color $i$ as
\begin{equation}
    P_i(\boldsymbol{\alpha})
    :=
    \{r\in\{1,\ldots,k\}:\alpha_r=i\}\ .
\end{equation}
The color-preserving condition implies that a permutation $\sigma$ induces, for every color $i$, a bijection
\begin{equation}
    \sigma_i
    :=
    \left.\sigma\right|_{P_i(\boldsymbol{\beta})}
    :
    P_i(\boldsymbol{\beta})
    \longrightarrow
    P_i(\boldsymbol{\alpha}) \ ,
    \label{eq:colour_restricted_sigma}
\end{equation}
and similarly
\begin{equation}
    \tau_i
    :=
    \left.\tau\right|_{P_i(\boldsymbol{\delta})}
    :
    P_i(\boldsymbol{\delta})
    \longrightarrow
    P_i(\boldsymbol{\gamma}) \ .
\end{equation}

Whenever $\boldsymbol{\alpha}=\boldsymbol{\gamma}$ and $\boldsymbol{\beta}=\boldsymbol{\delta}$, the two restricted maps have the same domain and codomain, so that
\begin{equation}
    \sigma_i^{-1}\tau_i
    :
    P_i(\boldsymbol{\beta})
    \longrightarrow
    P_i(\boldsymbol{\beta})
\end{equation}
is a permutation. Moreover $\sigma_i^{-1}\tau_i = \left.(\sigma^{-1}\tau)\right|_{P_i(\boldsymbol{\beta})}$.

Let us consider the example of Eq.~\eqref{eq:ex_alpha}, with $k=5$ and $m\geq3$. We fix the two colored words $\boldsymbol{\alpha} = \bigl(
\cpdiagdot{cpdiag color 2},
\cpdiagdot{cpdiag color 1},
\cpdiagdot{cpdiag color 3},
\cpdiagdot{cpdiag color 2},
\cpdiagdot{cpdiag color 1}
\bigr)$ and $\boldsymbol{\beta} = \bigl(
\cpdiagdot{cpdiag color 3},
\cpdiagdot{cpdiag color 1},
\cpdiagdot{cpdiag color 2},
\cpdiagdot{cpdiag color 1},
\cpdiagdot{cpdiag color 2}
\bigr)$.
The corresponding position sets are
\begin{align}
P_{\cpdiagdot{cpdiag color 1}}(\boldsymbol{\alpha})
    &=\{2,5\},
&
P_{\cpdiagdot{cpdiag color 1}}(\boldsymbol{\beta})
    &=\{2,4\},
\nonumber\\
P_{\cpdiagdot{cpdiag color 2}}(\boldsymbol{\alpha})
    &=\{1,4\},
&
P_{\cpdiagdot{cpdiag color 2}}(\boldsymbol{\beta})
    &=\{3,5\},
\nonumber\\
P_{\cpdiagdot{cpdiag color 3}}(\boldsymbol{\alpha})
    &=\{3\},
&
P_{\cpdiagdot{cpdiag color 3}}(\boldsymbol{\beta})
    &=\{1\}.
\label{app:eq:position_sets_example}
\end{align}

We consider two color-preserving permutations,
\begin{equation}
    \sigma=(13)(254),
    \qquad
    \tau=(1345)(2),
\label{app:eq:sigma_tau_example}
\end{equation}
both mapping the boundary word $\boldsymbol{\beta}$ to
$\boldsymbol{\alpha}$. Pictorially,
\begin{equation}
\sigma =
\vcenter{\hbox{%
\begin{tikzpicture}[
    x=0.62cm,
    y=0.72cm,
    wire/.style={
        line width=1.5pt,
        line cap=round
    }
]


\draw[wire, draw=cpdiagwirethree]
(2,1) -- (0,0);

\draw[wire, draw=cpdiagwireone]
(4,1) -- (1,0);

\draw[wire, draw=cpdiagwireone]
(1,1) -- (3,0);

\draw[wire, draw=cpdiagwiretwo]
(0,1) -- (2,0);

\draw[wire, draw=cpdiagwiretwo]
(3,1) -- (4,0);

\node at (0,1) {\cpdiagdot{cpdiag color 2}};
\node at (1,1) {\cpdiagdot{cpdiag color 1}};
\node at (2,1) {\cpdiagdot{cpdiag color 3}};
\node at (3,1) {\cpdiagdot{cpdiag color 2}};
\node at (4,1) {\cpdiagdot{cpdiag color 1}};

\node at (0,0) {\cpdiagdot{cpdiag color 3}};
\node at (1,0) {\cpdiagdot{cpdiag color 1}};
\node at (2,0) {\cpdiagdot{cpdiag color 2}};
\node at (3,0) {\cpdiagdot{cpdiag color 1}};
\node at (4,0) {\cpdiagdot{cpdiag color 2}};

\end{tikzpicture}%
}},
\label{eq:sigma_diagram}
\end{equation}

\begin{equation}
\tau =
\vcenter{\hbox{%
\begin{tikzpicture}[
    x=0.62cm,
    y=0.72cm,
    wire/.style={
        line width=1.5pt,
        line cap=round
    }
]


\draw[wire, draw=cpdiagwirethree]
(2,1) -- (0,0);

\draw[wire, draw=cpdiagwireone]
(1,1) -- (1,0);

\draw[wire, draw=cpdiagwireone]
(4,1) -- (3,0);

\draw[wire, draw=cpdiagwiretwo]
(3,1) -- (2,0);

\draw[wire, draw=cpdiagwiretwo]
(0,1) -- (4,0);

\node at (0,1) {\cpdiagdot{cpdiag color 2}};
\node at (1,1) {\cpdiagdot{cpdiag color 1}};
\node at (2,1) {\cpdiagdot{cpdiag color 3}};
\node at (3,1) {\cpdiagdot{cpdiag color 2}};
\node at (4,1) {\cpdiagdot{cpdiag color 1}};

\node at (0,0) {\cpdiagdot{cpdiag color 3}};
\node at (1,0) {\cpdiagdot{cpdiag color 1}};
\node at (2,0) {\cpdiagdot{cpdiag color 2}};
\node at (3,0) {\cpdiagdot{cpdiag color 1}};
\node at (4,0) {\cpdiagdot{cpdiag color 2}};

\end{tikzpicture}%
}}.
\label{eq:tau_diagram}
\end{equation}

Their restrictions to each color sector are the bijections $\sigma_i$ and $\tau_i$, which are graphically identified with the wires of color $i$.

We now consider the permutations $\sigma_i^{-1}\tau_i$. These can be obtained either by composing the restricted maps $\sigma_i^{-1}$ and $\tau_i$, or equivalently by first constructing the global relative permutation $\sigma^{-1}\tau$ and then restricting it to the corresponding color-position set. The two procedures give the same result. In the present example, this yields

\begin{equation}
\sigma^{-1}\tau =
\vcenter{\hbox{%
\begin{tikzpicture}[
    x=0.62cm,
    y=0.72cm,
    wire/.style={
        line width=1.5pt,
        line cap=round
    }
]


\draw[wire, draw=cpdiagwirethree]
(0,1) -- (0,0);

\draw[wire, draw=cpdiagwireone]
(1,1) -- (3,0);

\draw[wire, draw=cpdiagwireone]
(3,1) -- (1,0);

\draw[wire, draw=cpdiagwiretwo]
(2,1) -- (4,0);

\draw[wire, draw=cpdiagwiretwo]
(4,1) -- (2,0);

\node at (0,1) {\cpdiagdot{cpdiag color 3}};
\node at (1,1) {\cpdiagdot{cpdiag color 1}};
\node at (2,1) {\cpdiagdot{cpdiag color 2}};
\node at (3,1) {\cpdiagdot{cpdiag color 1}};
\node at (4,1) {\cpdiagdot{cpdiag color 2}};

\node at (0,0) {\cpdiagdot{cpdiag color 3}};
\node at (1,0) {\cpdiagdot{cpdiag color 1}};
\node at (2,0) {\cpdiagdot{cpdiag color 2}};
\node at (3,0) {\cpdiagdot{cpdiag color 1}};
\node at (4,0) {\cpdiagdot{cpdiag color 2}};

\end{tikzpicture}%
}},
\label{eq:sigma_inverse_tau_diagram}
\end{equation}

so that 
\begin{align}
    \sigma^{-1}_{\cpdiagdot{cpdiag color 1}} 
    \tau_{\cpdiagdot{cpdiag color 1}} = s \quad\quad\quad
\sigma^{-1}_{\cpdiagdot{cpdiag color 2}} 
    \tau_{\cpdiagdot{cpdiag color 2}} = s \quad\quad\quad
    \sigma^{-1}_{\cpdiagdot{cpdiag color 3}} 
    \tau_{\cpdiagdot{cpdiag color 3}} = e \ .
\end{align}

We are then ready to state the Gram matrix expression.

\begin{proposition}[Gram matrix]
\label{prop:block_commutant_gram}
Let $\boldsymbol{A}=(\boldsymbol{\alpha},\boldsymbol{\beta};\sigma)$ and $\boldsymbol{B}=(\boldsymbol{\gamma},\boldsymbol{\delta};\tau)$ be two nonzero color-preserving diagrams, with $D_{\boldsymbol{A}}:=\Pi_{\boldsymbol{\alpha}}\V_\sigma\Pi_{\boldsymbol{\beta}}, \, D_{\boldsymbol{B}}:=\Pi_{\boldsymbol{\gamma}}\V_\tau\Pi_{\boldsymbol{\delta}}$. The Hilbert--Schmidt Gram matrix is then
\begin{equation}
    G_{\boldsymbol{A},\boldsymbol{B}}
    :=
    \Tr\!\left(D_{\boldsymbol{A}}^\dagger D_{\boldsymbol{B}}\right)
    =
    \delta_{\boldsymbol{\alpha},\boldsymbol{\gamma}}
    \delta_{\boldsymbol{\beta},\boldsymbol{\delta}}
    \prod_{i=1}^{m}
    d_i^{\#\operatorname{cycles}(\sigma_i^{-1}\tau_i)}\ .
    \label{eq:block_commutant_gram_cycles}
\end{equation}
\end{proposition}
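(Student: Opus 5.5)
We compute $\Tr(D_{\boldsymbol A}^\dagger D_{\boldsymbol B})$ directly and reduce it to ordinary Haar-type Gram entries color by color. Throughout we use the covariance relation $\V_\sigma \Pi_{\boldsymbol\beta}\V_\sigma^\dagger=\Pi_{\sigma\cdot\boldsymbol\beta}$, where $(\sigma\cdot\boldsymbol\beta)_r=\beta_{\sigma^{-1}(r)}$, which follows from \eqref{eq:permutation_operator}. The color-preserving condition then reads $\sigma\cdot\boldsymbol\beta=\boldsymbol\alpha$, so that $D_{\boldsymbol A}=\Pi_{\boldsymbol\alpha}\V_\sigma\Pi_{\boldsymbol\beta}=\V_\sigma\Pi_{\boldsymbol\beta}=\Pi_{\boldsymbol\alpha}\V_\sigma$. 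In particular $D_{\boldsymbol A}$ is a partial isometry from $\HH_{\boldsymbol\beta}$ onto $\HH_{\boldsymbol\alpha}$.

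\textbf{Step 1: the Kronecker deltas.} We write
\[
D_{\boldsymbol A}^\dagger D_{\boldsymbol B}=\Pi_{\boldsymbol\beta}\V_\sigma^\dagger\Pi_{\boldsymbol\alpha}\Pi_{\boldsymbol\gamma}\V_\tau\Pi_{\boldsymbol\delta}.
\]
The colored projectors are mutually orthogonal, $\Pi_{\boldsymbol\alpha}\Pi_{\boldsymbol\gamma}=\delta_{\boldsymbol\alpha,\boldsymbol\gamma}\Pi_{\boldsymbol\alpha}$, since $\Pi_i\Pi_j=\delta_{ij}\Pi_i$ in each tensor factor. This gives the factor $\delta_{\boldsymbol\alpha,\boldsymbol\gamma}$. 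Next we take the trace, use cyclicity to bring $\Pi_{\boldsymbol\delta}$ next to $\Pi_{\boldsymbol\beta}$, and obtain $\delta_{\boldsymbol\beta,\boldsymbol\delta}$. When both deltas hold, the trace equals $\Tr(\Pi_{\boldsymbol\beta}\V_{\sigma^{-1}\tau}\Pi_{\boldsymbol\beta})$. Here $\pi:=\sigma^{-1}\tau$ fixes the word, $\pi\cdot\boldsymbol\beta=\boldsymbol\beta$, so it preserves every position set $P_i(\boldsymbol\beta)$ and its restriction to $P_i(\boldsymbol\beta)$ is exactly $\sigma_i^{-1}\tau_i$, as noted before the proposition.

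\textbf{Step 2: factorization over colors.} This is the main obstacle, mostly a matter of careful bookkeeping. We evaluate $\Tr(\Pi_{\boldsymbol\beta}\V_\pi)$ in a product basis adapted to the decomposition $\HH=\bigoplus_i\HH_i$, namely $\ket{x_1,\dots,x_k}$ with $x_r$ an orthonormal basis vector of $\HH_{\beta_r}$. The matrix element $\bra{x}\V_\pi\ket{x}$ equals $\prod_r\delta_{x_r,x_{\pi^{-1}(r)}}$, so the trace counts the basis labelings that are constant on the cycles of $\pi$. Because $\pi$ preserves each $P_i(\boldsymbol\beta)$, every cycle of $\pi$ lies inside a single color set $P_i(\boldsymbol\beta)$, and its common label ranges over the $d_i$ basis vectors of $\HH_i$. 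Equivalently, after reordering tensor factors we have $\HH_{\boldsymbol\beta}\cong\bigotimes_i\HH_i^{\otimes n_i}$, and under this identification $\V_\pi$ restricted to $\HH_{\boldsymbol\beta}$ becomes $\bigotimes_i\V_{\sigma_i^{-1}\tau_i}$.

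\textbf{Step 3: conclusion.} Each color then contributes the standard factor $\Tr_{\HH_i^{\otimes n_i}}\V_{\sigma_i^{-1}\tau_i}=d_i^{\#\mathrm{cycles}(\sigma_i^{-1}\tau_i)}$, and the trace is the product of these factors. Colors with $n_i=0$ contribute $1$, consistent with the empty permutation having zero cycles. This yields \eqref{eq:block_commutant_gram_cycles}. The only delicate point is to check consistently, with the convention \eqref{eq:permutation_operator}, that $\V_\sigma^\dagger=\V_{\sigma^{-1}}$ and $\V_\sigma^\dagger\V_\tau=\V_{\sigma^{-1}\tau}$. The cycle count is invariant under inversion and conjugation, so any convention slip there does not affect the result.
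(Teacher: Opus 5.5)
Your proposal is correct and follows essentially the same route as the paper. Orthogonality of the colored projectors gives $\delta_{\boldsymbol\alpha,\boldsymbol\gamma}$, and the intertwining relations ($D_{\boldsymbol A}=\V_\sigma\Pi_{\boldsymbol\beta}=\Pi_{\boldsymbol\alpha}\V_\sigma$ in your notation) reduce the trace to $\Tr(\Pi_{\boldsymbol\beta}\V_{\sigma^{-1}\tau})$, which is then evaluated in a product basis by counting cycles of $\sigma^{-1}\tau$, each lying in a single color set and contributing $d_i$. The only cosmetic difference is that you extract $\delta_{\boldsymbol\beta,\boldsymbol\delta}$ directly by cyclicity of the trace, whereas the paper first moves the projectors through the permutation operators and then uses orthogonality.
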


The Hilbert--Schmidt product $G_{\boldsymbol A,\boldsymbol B}$ of two diagrams is computed by stacking the
reflected first diagram on top of the second one.
The product is zero unless the two diagrams have the same boundary words $\boldsymbol{\alpha}=\boldsymbol{\gamma}$ and $\boldsymbol{\beta}=\boldsymbol{\delta}$.
When the boundary words agree, the stacked diagram decomposes into closed
loops. A closed loop of color $i$ contributes a factor $d_i$.\\
The restricted maps $\sigma_i$ and $\tau_i$ can also be identified with permutations in $S_{n_i}$ by order-preserving relabelings of their respective domain and codomain. Denoting the resulting permutations by $\widehat{\sigma}_i,\widehat{\tau}_i\in S_{n_i}$, the cycle structure is preserved under this identification, so that
\begin{equation}
d_i^{\#\operatorname{cycles}(\sigma_i^{-1}\tau_i)}
=
G(n_i,d_i)_{\widehat{\sigma}_i,\widehat{\tau}_i}.
\label{eq:restricted_gram_identification}
\end{equation}
The relabeling and its relation to the restricted maps are made explicit in Appendix~\ref{app:C}.

In the example above the boundary words $\boldsymbol{\alpha}$ and $\boldsymbol{\beta}$ are fixed, and we have selected two particular color-preserving permutations $\sigma$ and $\tau$. We are therefore computing a single matrix element of the Gram block associated with these boundary words. For this pair,
\begin{equation}
G_{(\boldsymbol{\alpha},\boldsymbol{\beta};\sigma),
  (\boldsymbol{\alpha},\boldsymbol{\beta};\tau)}
=
\prod_{i=1}^{m}
d_i^{\#\operatorname{cycles}(\sigma_i^{-1}\tau_i)}
=
d_{\cpdiagdot{cpdiag color 1}}\,
d_{\cpdiagdot{cpdiag color 2}}\,
d_{\cpdiagdot{cpdiag color 3}}.
\label{eq:app_example_gram_element}
\end{equation}

\begin{figure}[t]
    \centering
    \includegraphics[width=0.98\linewidth]{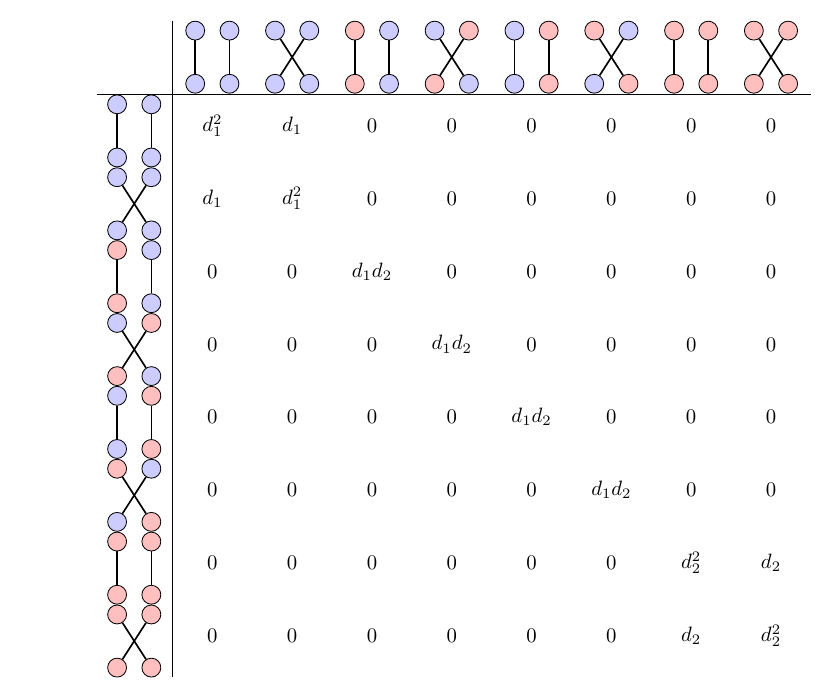}
    \caption{Gram matrix of the color-preserving permutation operators for two replicas and two sectors ($k=m=2$), with sector dimensions $d_1$ and $d_2$. Rows and columns are labeled by diagrams $D_{\boldsymbol A}$, and each entry is the Hilbert--Schmidt overlap $G_{\boldsymbol A,\boldsymbol B}=\Tr(D_{\boldsymbol A}^{\dagger}D_{\boldsymbol B})$. Graphically, this overlap is evaluated by composing the adjoint of the first diagram with the second and closing the remaining wires. Each closed loop of color $i$ contributes a factor $d_i$, while incompatible colors give zero. The same-color diagrams form the two $2\times2$ blocks $G(2,d_1)$ and $G(2,d_2)$; the four mixed-color diagrams are then $d_1d_2$.}
    \label{fig:gram_k2m2}
\end{figure}

The previous proposition immediately yields the following block decomposition
of the full Gram matrix.

\begin{corollary}[Block decomposition of the Gram matrix]
\label{cor:gram_decomposition}
The full Gram matrix decomposes as
\begin{align}
    G
    = &
    \bigoplus_{\substack{n,\\ |n|=k}} 
    \left[
        \bigotimes_{i=1}^{m} G(n_i,d_i)
    \right]^{\oplus N_{n}^2} \\
    \equiv &
    \bigoplus_{ n}
    \left(
    \mathbb{I}_{N_{ n}^2}
    \otimes
    \bigotimes_{i=1}^{m} G(n_i,d_i)
    \right)\ , 
    \\ 
    &\mathrm{with }\quad N_{n}
    :=
    \frac{k!}{\prod_{i=1}^{m} n_i!}\ ,
\label{eq:graphical_loop_rule}
\end{align}
where $G(n_i,d_i)$ is the ordinary unitary Gram matrix on $n_i$ replicas in dimension $d_i$.
\end{corollary}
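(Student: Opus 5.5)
The plan is to read the decomposition directly off Proposition~\ref{prop:block_commutant_gram}, by ordering the index set $\mathcal C_k$ so that the block structure becomes visible.

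\textbf{Step 1: organize the index set.} Each label $\boldsymbol A=(\boldsymbol\alpha,\boldsymbol\beta;\sigma)$ with $D_{\boldsymbol A}\neq 0$ satisfies $n(\boldsymbol\alpha)=n(\boldsymbol\beta)$, since a color-preserving $\sigma$ is a bijection between equally colored positions. I therefore first group the labels by the common occupation vector $n$. Within a fixed $n$, I further group them by the ordered pair of boundary words $(\boldsymbol\alpha,\boldsymbol\beta)$. There are exactly $N_n=k!/\prod_i n_i!$ words with occupation $n$, so there are $N_n^2$ such pairs.

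\textbf{Step 2: block-diagonality.} The factor $\delta_{\boldsymbol\alpha,\boldsymbol\gamma}\delta_{\boldsymbol\beta,\boldsymbol\delta}$ in Proposition~\ref{prop:block_commutant_gram} shows that $G$ has no entries between different pairs $(\boldsymbol\alpha,\boldsymbol\beta)$. In particular it has none between different $n$. Hence $G$ is a direct sum, over $n$ and over the $N_n^2$ pairs, of the blocks $G^{(\boldsymbol\alpha,\boldsymbol\beta)}$ indexed by the admissible $\sigma$.

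\textbf{Step 3: identify each block.} For fixed $(\boldsymbol\alpha,\boldsymbol\beta)$, the admissible $\sigma$ are in bijection with tuples $(\sigma_1,\dots,\sigma_m)$ of bijections $\sigma_i:P_i(\boldsymbol\beta)\to P_i(\boldsymbol\alpha)$. One direction is the restriction map; conversely, any such tuple glues to a unique $\sigma\in S_k$, because the sets $P_i$ partition $\{1,\dots,k\}$. Using the order-preserving relabelings of Appendix~\ref{app:C}, this gives a bijection with $\prod_i S_{n_i}$, $\sigma\mapsto(\widehat\sigma_i)_i$. By \eqref{eq:restricted_gram_identification}, the entry becomes
\[
G^{(\boldsymbol\alpha,\boldsymbol\beta)}_{\sigma,\tau}=\prod_i G(n_i,d_i)_{\widehat\sigma_i,\widehat\tau_i},
\]
which is exactly the matrix element of $\bigotimes_i G(n_i,d_i)$ in the product basis. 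The same tensor-product matrix is obtained for every pair with occupation $n$, which produces the factor $\mathbb I_{N_n^2}\otimes\bigotimes_i G(n_i,d_i)$. Colors with $n_i=0$ contribute the trivial $1\times1$ matrix $G(0,d_i)=(1)$.

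\textbf{Main obstacle.} The only delicate point is the bookkeeping in Step 3. One must check that the relabeling is a genuine bijection onto $\prod_i S_{n_i}$, so that no $\sigma$ is missed or double-counted. One must also check that $\sigma_i^{-1}\tau_i$ is conjugate to $\widehat\sigma_i^{-1}\widehat\tau_i$ and therefore has the same cycle count. For the conjugacy, I would write $\widehat\sigma_i=\phi_{\boldsymbol\alpha}\sigma_i\phi_{\boldsymbol\beta}^{-1}$ with order-preserving maps $\phi$. This gives $\widehat\sigma_i^{-1}\widehat\tau_i=\phi_{\boldsymbol\beta}\sigma_i^{-1}\tau_i\phi_{\boldsymbol\beta}^{-1}$, whose cycle type is that of $\sigma_i^{-1}\tau_i$. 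Once this is established, the corollary is a reordering of rows and columns.
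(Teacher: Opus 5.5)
Your proposal is correct and follows essentially the same route as the paper's proof in Appendix~\ref{app:C}. Both arguments get block-diagonality from the boundary Kronecker deltas of Proposition~\ref{prop:block_commutant_gram}. Both relabel the restricted bijections $\sigma_i$ as $\widehat{\sigma}_i\in S_{n_i}$ and use the conjugation $\widehat{\sigma}_i^{-1}\widehat{\tau}_i=\xi_{\boldsymbol{\beta},i}^{-1}(\sigma_i^{-1}\tau_i)\xi_{\boldsymbol{\beta},i}$ to preserve cycle counts, then count $N_n^2$ boundary pairs per occupation vector. Your explicit gluing argument for surjectivity onto $\prod_i S_{n_i}$ and your remark on colors with $n_i=0$ are small additions that the paper leaves implicit.
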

The block structure in Eq.~\eqref{eq:graphical_loop_rule} becomes manifest once the color-preserving permutation basis is suitably ordered. We first group the basis elements according to their occupation vector $n$. For a fixed $n$, the number of colored words with these occupation numbers is $N_n=k!/\prod_{i=1}^m n_i!$,
since this is the number of permutations of $k$ elements with repetitions, where color $i$ appears $n_i$ times. We then group the basis elements according to the pair of boundary words
$(\boldsymbol{\alpha},\boldsymbol{\beta})$ compatible with the fixed $n$. Since both words must have the same occupation vector $n$, there are $N_n^2$ such pairs. For each fixed pair $(\boldsymbol{\alpha},\boldsymbol{\beta})$, the remaining degrees of freedom are independent permutations within each color, giving rise to the tensor product $\bigotimes_i G(n_i,d_i)$. Hence, for each occupation vector $n$, this tensor-product block appears with multiplicity $N_n^2$.

In the example above the explicit pair $\sigma$ and $\tau$ gives a single matrix element of the Gram block associated with the fixed boundary words $(\boldsymbol{\alpha},\boldsymbol{\beta})$. Varying $\sigma$ and $\tau$ over all compatible color-preserving permutations reconstructs the entire tensor-product block. Different pairs of boundary words with the same occupation numbers give identical blocks, while varying the occupation numbers generates the different block types entering the full Gram matrix.\\

Figure~\ref{fig:gram_k2m2} illustrates this structure in the minimal case
$k=m=2$, for which
\begin{equation}
    G
    =
    G(2,d_1)
    \oplus
    [d_1d_2]^{\oplus 4}
    \oplus
    G(2,d_2) \ .
\end{equation}
Following the ordering introduced above, we first group the basis elements by their occupation vector $n$. For each fixed $n$, we then change the pair of boundary colored words $(\boldsymbol{\alpha},\boldsymbol{\beta})$ compatible with $n$. This gives, with the convention $ D^\sigma_{\boldsymbol{\alpha}, \boldsymbol{\beta}}\equiv D_{\boldsymbol A}$,
\begingroup
\newcommand{\cblue}{\cpdiagdot{cpdiag color 1}}
\newcommand{\cred}{\cpdiagdot{cpdiag color 2}}
\begin{equation}
\begin{aligned}
\big(
&D_{(\cblue,\cblue),(\cblue,\cblue)}^{e},
D_{(\cblue,\cblue),(\cblue,\cblue)}^{s},
D_{(\cred,\cblue),(\cred,\cblue)}^{e},
D_{(\cblue,\cred),(\cred,\cblue)}^{s},
\\
&
D_{(\cblue,\cred),(\cblue,\cred)}^{e},
D_{(\cred,\cblue),(\cblue,\cred)}^{s},
D_{(\cred,\cred),(\cred,\cred)}^{e},
D_{(\cred,\cred),(\cred,\cred)}^{s}
\big)\ ,
\end{aligned}
\end{equation}
\endgroup
corresponding to the order of the color-preserving permutations chosen in the lateral column and rows of Figure~\ref{fig:gram_k2m2}.
The monochromatic occupation patterns $n=(2,0)$ and $n=(0,2)$ give the
ordinary Gram blocks $G(2,d_1)$ and $G(2,d_2)$, respectively. Within each
of these blocks, the two basis elements correspond to the permutations
$e,s\in S_2$: equal permutations produce two closed loops and hence the
diagonal entries $d_i^2$, while different permutations produce a single
loop and hence the off-diagonal entries $d_i$.
For the mixed occupation pattern $n=(1,1)$, instead, the permutation group
associated with each color is $S_1$ and is therefore trivial. Each fixed
pair of boundary words thus gives the one-dimensional Gram block $G(1,d_1)\otimes G(1,d_2)=[d_1d_2]$.
Since there are $N_{(1,1)}=2$ possible boundary words, this block appears
with multiplicity $N_{(1,1)}^2=4$, corresponding to the four central
$d_1d_2$ entries in Fig.~\ref{fig:gram_k2m2}. Matrix elements between
different pairs of boundary words vanish.

\subsection{Weingarten Matrix}
Since the Weingarten matrix is defined as the Moore--Penrose pseudoinverse of the Gram matrix, the decomposition above immediately induces the corresponding block decomposition
\begin{align}
    \mathrm{Wg}
    & = \bigoplus_{\substack{n,\\ |n|=k}} 
    \left[
        \bigotimes_{i=1}^{m} \Wg(n_i,d_i)
    \right]^{\oplus N_{n}^2} \\
    \equiv &
    \bigoplus_{ n}
    \left(
    \mathbb{I}_{N_{ n}^2}
    \otimes
    \bigotimes_{i=1}^{m} \Wg(n_i,d_i)
    \right)\ , 
    \\ 
    &\mathrm{with }\quad N_{n}
    :=
    \frac{k!}{\prod_{i=1}^{m} n_i!} \ ,
\end{align}
where $\Wg(n_i,d_i)$
is the ordinary unitary Weingarten matrix on $n_i$ replicas in dimension $d_i$. 

In this way, when each subspace dimension $d_i$ is considered asymptotically large, one can use the standard result in Weingarten calculus
\begin{equation}
    \Wg_{\widehat{\sigma}_i,\widehat{\tau}_i}(n_i,d_i)
    =
    d_i^{-n_i-|\pi_i|}
    \left[
        \mathrm{Mob}(\pi_i)
        +
        O(d_i^{-2})
    \right],
\end{equation}
where
\begin{equation}
    \pi_i
    :=
    \widehat{\sigma}_i^{-1}\widehat{\tau}_i
    \in S_{n_i}\ ,
    \qquad
    |\pi_i|
    :=
    n_i-\#\operatorname{cycles}(\pi_i)\ .
    \label{eq:colour_relative_permutation}
\end{equation}
The function $\mathrm{Mob}$ is the Möbius function on permutations. If
$\pi\in S_r$ has cycle decomposition $\pi=c_1\cdots c_\ell$, then
\begin{equation}
    \mathrm{Mob}(\pi)
    :=
    \prod_{c\in \mathrm{Cycles}(\pi)}
    (-1)^{|c|-1}
    \mathrm{Cat}_{|c|-1} \ ,
    \label{eq:mobius_definition}
\end{equation}
where $\mathrm{Cat}_{q}:=\frac{1}{q+1}\binom{2q}{q}$
is the $q$-th Catalan number.
Therefore, for $\boldsymbol{A}=(\boldsymbol{\alpha},\boldsymbol{\beta};\sigma)$ and $\boldsymbol{B}=(\boldsymbol{\gamma},\boldsymbol{\delta};\tau)$,
\begin{equation}
\small
    \Wg_{\boldsymbol A,\boldsymbol B}
    =
    \delta_{\boldsymbol{\alpha},\boldsymbol{\gamma}}
    \delta_{\boldsymbol{\beta},\boldsymbol{\delta}}
    \prod_{i=1}^{m}
    d_i^{-n_i-|\pi_i|}
    \left[
        \mathrm{Mob}(\pi_i)
        +
        O(d_i^{-2})
    \right] \ .
    \label{eq:wg_coloured_asymptotic}
\end{equation}

\section{Random Quantum States with Symmetries}
\label{sec_quantumstates}

Random quantum states provide a natural setting for studying typical properties of high-dimensional quantum systems. In the presence of symmetries, however, the relevant notion of randomness is constrained by the decomposition of Hilbert space into symmetry sectors, leading to block-resolved ensembles.
In what follows, we study $k$-th moment operators of pure states and their application to bipartite entanglement.

\subsection{Pure state averages}

The unitary Weingarten calculus above gives a simple sector-resolved
calculus for random pure states. Let
\begin{equation}
    \ket{\psi_0}
    =
    \sum_{i=1}^m \sqrt{p_i}\,\ket{\eta_i},
    \quad
    p_i\geq 0,
    \quad
    \sum_{i=1}^m p_i=1,
    \label{eq:initial_state_sector_decomposition}
\end{equation}
where each $\ket{\eta_i}\in\HH_i$ is normalized whenever $p_i\neq 0$. 
The block-unitary pure-state ensemble is
\begin{equation}
    \ket{\psi(U)}
    :=
    U\ket{\psi_0}
    =
    \sum_{i=1}^m
    \sqrt{p_i}\,U_i\ket{\eta_i},
    \label{eq:block_haar_state_ensemble}
\end{equation}
with $U=\bigoplus_{i=1}^m U_i\in \mathcal{U}(\HH|\textbf{M})$ and the numbers $p_i$ are conserved by the ensemble. They are the sector weights of
the state.
For a triple $\boldsymbol A$, define
\begin{equation}
    n(\boldsymbol A)
    :=
    n(\boldsymbol{\alpha})
    =
    n(\boldsymbol{\beta}),
    \qquad
    p^{n(\boldsymbol A)}
    :=
    \prod_{i=1}^m p_i^{n_i(\boldsymbol A)},
\end{equation}
and introduce the rising factorial
\begin{equation}
    (d_i)_{r}
    :=
    d_i(d_i+1)\cdots(d_i+r-1),
    \qquad
    (d_i)_0:=1 .
    \label{eq:rising_factorial}
\end{equation}

Then, one can compute the average state and all its moments for the ensemble $\mathcal{U}(\HH|\textbf{M})$ as follows.

\begin{lemma}[Pure-state colored Weingarten calculus]
\label{lemma:pure_state_block_weingarten}
Let
\begin{equation}
    M_k(\boldsymbol p)
    :=
    \int_{\mathcal{U}(\HH|\textbf{M})}dU\,
    \left(\proj{\psi(U)}\right)^{\otimes k}
    \label{eq:pure_state_moment_operator}
\end{equation}
be the $k$-th moment operator of the block-unitary pure-state ensemble. Then
\begin{equation}
    M_k(\boldsymbol p)
    =
    \sum_{\boldsymbol A\in\mathcal{C}_k}
    \frac{
        p^{n(\boldsymbol A)}
    }{
        \prod_{i=1}^m (d_i)_{n_i(\boldsymbol A)}
    }
    D_{\boldsymbol A}.
    \label{eq:pure_state_block_weingarten_compact}
\end{equation}
\end{lemma}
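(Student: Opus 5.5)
The plan is to bypass the full Weingarten matrix and work directly with the vector $\ket{\psi(U)}^{\otimes k}$, using independence of the blocks together with the standard single-block identity
\begin{equation}
\int_{\mathcal U(\HH_i)} dU_i\, \bigl(U_i\proj{\eta_i}U_i^\dagger\bigr)^{\otimes n_i} = \frac{1}{(d_i)_{n_i}}\sum_{\sigma\in S_{n_i}} \V_\sigma\big|_{\HH_i^{\otimes n_i}} .
\end{equation}
This identity is the symmetric-subspace projector divided by $\binom{d_i+n_i-1}{n_i}$, i.e.\ $\frac{n_i!}{(d_i)_{n_i}}\cdot\frac{1}{n_i!}\sum_\sigma \V_\sigma$. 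It is the special case $m=1$ of the claim and can be taken as known, or obtained from \cref{thm:block_commutant} plus Schur's lemma on the irreducible symmetric subspace.

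\textbf{Expansion and vanishing of cross terms.} First expand
\begin{equation}
\ket{\psi(U)}^{\otimes k}=\sum_{\boldsymbol\alpha}\sqrt{p^{n(\boldsymbol\alpha)}}\,U^{\otimes k}\ket{\eta_{\boldsymbol\alpha}},
\qquad \ket{\eta_{\boldsymbol\alpha}}:=\ket{\eta_{\alpha_1}}\otimes\cdots\otimes\ket{\eta_{\alpha_k}}\in\HH_{\boldsymbol\alpha}.
\end{equation}
Then $M_k(\boldsymbol p)$ is a double sum over words $\boldsymbol\alpha,\boldsymbol\beta$ of $\sqrt{p^{n(\boldsymbol\alpha)}p^{n(\boldsymbol\beta)}}\,\E\bigl[U^{\otimes k}\ket{\eta_{\boldsymbol\alpha}}\!\bra{\eta_{\boldsymbol\beta}}U^{\dagger\otimes k}\bigr]$.

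Next show that the terms with $n(\boldsymbol\alpha)\neq n(\boldsymbol\beta)$ vanish. The Haar measure on each block is invariant under $U_i\mapsto e^{i\theta_i}U_i$, and this rescales the term by $e^{i\sum_i\theta_i(n_i(\boldsymbol\alpha)-n_i(\boldsymbol\beta))}$. For $n(\boldsymbol\alpha)=n(\boldsymbol\beta)=n$ the prefactor becomes exactly $p^n$, which already accounts for the factor $p^{n(\boldsymbol A)}$.

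\textbf{Reduction to sorted words.} For fixed $n$, let $\boldsymbol\omega_n$ be the sorted word $(1^{n_1},\dots,m^{n_m})$. For each $\boldsymbol\alpha$ with $n(\boldsymbol\alpha)=n$, fix a permutation $\pi_{\boldsymbol\alpha}$ with $\V_{\pi_{\boldsymbol\alpha}}\ket{\eta_{\boldsymbol\omega_n}}=\ket{\eta_{\boldsymbol\alpha}}$. Since $\V_\pi$ commutes with $U^{\otimes k}$,
\begin{equation}
\E\bigl[U^{\otimes k}\ket{\eta_{\boldsymbol\alpha}}\!\bra{\eta_{\boldsymbol\beta}}U^{\dagger\otimes k}\bigr]
=\V_{\pi_{\boldsymbol\alpha}}\,\E\bigl[U^{\otimes k}\proj{\eta_{\boldsymbol\omega_n}}U^{\dagger\otimes k}\bigr]\,\V_{\pi_{\boldsymbol\beta}}^\dagger .
\end{equation}
On the sorted word the middle average factorizes over independent colors, and the single-block identity gives
\begin{equation}
\prod_i (d_i)_{n_i}^{-1}\sum_{(\sigma_1,\dots,\sigma_m)\in\prod_i S_{n_i}} \Pi_{\boldsymbol\omega_n}\,\V_{\sigma_1\times\cdots\times\sigma_m}\,\Pi_{\boldsymbol\omega_n}.
\end{equation}
Conjugating by $\V_{\pi_{\boldsymbol\alpha}},\V_{\pi_{\boldsymbol\beta}}^\dagger$ turns each summand into $\Pi_{\boldsymbol\alpha}\V_{\sigma}\Pi_{\boldsymbol\beta}$ with $\sigma=\pi_{\boldsymbol\alpha}(\sigma_1\times\cdots\times\sigma_m)\pi_{\boldsymbol\beta}^{-1}$, which is color-preserving from $\boldsymbol\beta$ to $\boldsymbol\alpha$.

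\textbf{Main obstacle: the counting.} The step requiring care is checking that, for fixed $(\boldsymbol\alpha,\boldsymbol\beta)$, the map $(\sigma_i)_i\mapsto\sigma$ is a bijection onto the set of color-preserving permutations $\boldsymbol\beta\to\boldsymbol\alpha$. Equivalently, this set is exactly the coset $\pi_{\boldsymbol\alpha}\bigl(\prod_i S_{n_i}\bigr)\pi_{\boldsymbol\beta}^{-1}$, which follows from the identification of $\sigma$ with its restrictions $\sigma_i:P_i(\boldsymbol\beta)\to P_i(\boldsymbol\alpha)$ used in \cref{prop:block_commutant_gram}. One also has to fix the convention so that $\pi_{\boldsymbol\alpha}$ and the action in \eqref{eq:permutation_operator} match; this is a bookkeeping issue about $\sigma$ versus $\sigma^{-1}$, not a substantive one. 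Once the bijection holds, summing over $n$, over pairs $(\boldsymbol\alpha,\boldsymbol\beta)$ with that occupation, and over the compatible $\sigma$ enumerates every $\boldsymbol A\in\mathcal C_k$ exactly once, each with coefficient $p^{n(\boldsymbol A)}/\prod_i(d_i)_{n_i(\boldsymbol A)}$. This is the claimed formula.
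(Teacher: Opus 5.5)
Your proof is correct and follows essentially the same route as the paper's. You expand $\ket{\psi(U)}^{\otimes k}$ over colored words and discard terms with $n(\boldsymbol\alpha)\neq n(\boldsymbol\beta)$; you regroup tensor factors by color, apply the single-sector moment $\frac{1}{(d_i)_{n_i}}\sum_{\tau_i}\V_{\tau_i}^{(i)}$, and map back to the diagrams $\Pi_{\boldsymbol\alpha}\V_\sigma\Pi_{\boldsymbol\beta}$. The differences are cosmetic: you regroup with global permutations $\V_{\pi_{\boldsymbol\alpha}}$ and a coset count, where the paper uses the reordering unitaries $T_{\boldsymbol\alpha}$ and the bijection from the proof of \cref{thm:block_commutant}, and you spell out the phase argument for the vanishing cross terms that the paper only asserts.
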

Additionally, to recover the full unitary Haar-random moments, one can impose the following condition on the color weights.
\begin{corollary}[Averaging over sector weights]
\label{cor:Averaging_weights}
Let the sector weights be distributed according to the Diriclet distribution
\begin{equation}
    d\nu(\boldsymbol p)
    =
    \frac{1}{Z}
    \delta\left(\sum_{i=1}^m p_i-1\right)
    \prod_{i=1}^m p_i^{d_i-1}\,dp_i,
    \,
    Z=
    \frac{\prod_{i=1}^m\Gamma(d_i)}{\Gamma(d)},
\end{equation}
where $d=\sum_i d_i$.
Then, the average over sector weights restores the ordinary Haar pure-state moment on the full Hilbert space
\begin{equation}
    \E_{\boldsymbol p}[M_k(\boldsymbol p)]=
    \frac{P_{\sym}^{(k)}}{\binom{d+k-1}{k}},
\end{equation}
with $P_{\sym}^{(k)}=\frac{1}{k!}\sum_{\sigma\in S_k}\V_\sigma$.
\end{corollary}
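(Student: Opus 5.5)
The plan is to insert Lemma~\ref{lemma:pure_state_block_weingarten} into the Dirichlet average and show the result equals the symmetric projector. Since the only $\boldsymbol p$-dependence of $M_k(\boldsymbol p)$ is the monomial $p^{n(\boldsymbol A)}$, linearity gives
\begin{equation}
    \E_{\boldsymbol p}[M_k(\boldsymbol p)]
    =
    \sum_{\boldsymbol A\in\mathcal C_k}
    \frac{\E_{\boldsymbol p}\!\left[p^{n(\boldsymbol A)}\right]}{\prod_{i}(d_i)_{n_i(\boldsymbol A)}}\,D_{\boldsymbol A}.
\end{equation}
The Dirichlet moment formula, $\E[\prod_i p_i^{n_i}]=\frac{\Gamma(d)}{\Gamma(d+k)}\prod_i\frac{\Gamma(d_i+n_i)}{\Gamma(d_i)}$ with $\sum_i n_i=k$, then applies. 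Since $\Gamma(d_i+n_i)/\Gamma(d_i)=(d_i)_{n_i}$, the sector-dependent rising factorials cancel exactly against the denominators. Every coefficient therefore collapses to the same constant $\Gamma(d)/\Gamma(d+k)=1/(d)_k$, independent of $\boldsymbol A$.

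What remains is $\frac{1}{(d)_k}\sum_{\boldsymbol A\in\mathcal C_k}D_{\boldsymbol A}$, which should be resummed into permutation operators. I would reorganize the sum as $\sum_{\sigma\in S_k}\sum_{\boldsymbol\beta}\Pi_{\boldsymbol\alpha}\V_\sigma\Pi_{\boldsymbol\beta}$, where $\boldsymbol\alpha$ is fixed by $\boldsymbol\beta$ through $\alpha_r=\beta_{\sigma^{-1}(r)}$. The needed identity is $\V_\sigma\Pi_{\boldsymbol\beta}\V_\sigma^\dagger=\Pi_{\boldsymbol\alpha}$, which follows directly from~\eqref{eq:permutation_operator}. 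It gives $\Pi_{\boldsymbol\alpha}\V_\sigma\Pi_{\boldsymbol\beta}=\V_\sigma\Pi_{\boldsymbol\beta}$, so for each $\sigma$ the sum over $\boldsymbol\beta$ produces $\V_\sigma\sum_{\boldsymbol\beta}\Pi_{\boldsymbol\beta}=\V_\sigma(\sum_i\Pi_i)^{\otimes k}=\V_\sigma$. Consequently
\begin{equation}
    \E_{\boldsymbol p}[M_k(\boldsymbol p)]=\frac{1}{(d)_k}\sum_{\sigma\in S_k}\V_\sigma=\frac{k!}{(d)_k}P_{\sym}^{(k)},
\end{equation}
and $(d)_k/k!=\binom{d+k-1}{k}$ finishes the argument.

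The key step is the bijection between triples $\boldsymbol A=(\boldsymbol\alpha,\boldsymbol\beta;\sigma)$ and pairs $(\sigma,\boldsymbol\beta)$. I need to confirm that $\mathcal C_k$ in the lemma lists each such triple exactly once, including coincidences where distinct $\sigma$ give the same operator, as with $D^{e}$ and $D^{s}$ on $(\cblue,\cred)$-type words when colors repeat. The lemma's sum must be read over triples with multiplicity, and the dimension count $k!\,m^k$ suggests this reading. I expect this bookkeeping to be the main point to verify carefully. A sanity check at $k=1$ helps: there the result should reduce to $\sum_i p_i\Pi_i/d_i$, whose average is $\sum_i\Pi_i/d=\id/d$. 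The Dirichlet normalization with $Z=\prod_i\Gamma(d_i)/\Gamma(d)$ is standard and should be quoted rather than rederived.
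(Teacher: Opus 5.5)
Your proposal is correct and follows the paper's route: the paper's proof is just the Dirichlet moment identity $\E_{\boldsymbol p}[\prod_i p_i^{n_i}]=\prod_i(d_i)_{n_i}/(d)_k$ cancelling the rising factorials in Lemma~\ref{lemma:pure_state_block_weingarten}, after which it says the result ``follows directly,'' and you additionally make explicit the resummation $\sum_{\boldsymbol A\in\mathcal C_k}D_{\boldsymbol A}=\sum_{\sigma\in S_k}\V_\sigma$ via $\Pi_{\boldsymbol\alpha}\V_\sigma\Pi_{\boldsymbol\beta}=\V_\sigma\Pi_{\boldsymbol\beta}$ and $\sum_{\boldsymbol\beta}\Pi_{\boldsymbol\beta}=\id$. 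Your multiplicity worry is settled by the lemma's own proof, which sums over all tuples $(\tau_1,\dots,\tau_m)\in S_{n_1}\times\cdots\times S_{n_m}$; these are in bijection with the color-preserving $\sigma$ for each boundary pair, so $\mathcal C_k$ is indexed by triples, equivalently by pairs $(\sigma,\boldsymbol\beta)$, even when distinct triples give the same operator (which happens only if some $d_i<n_i$).
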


\begin{figure}[t]
    \centering
    \includegraphics[width=0.9\linewidth]{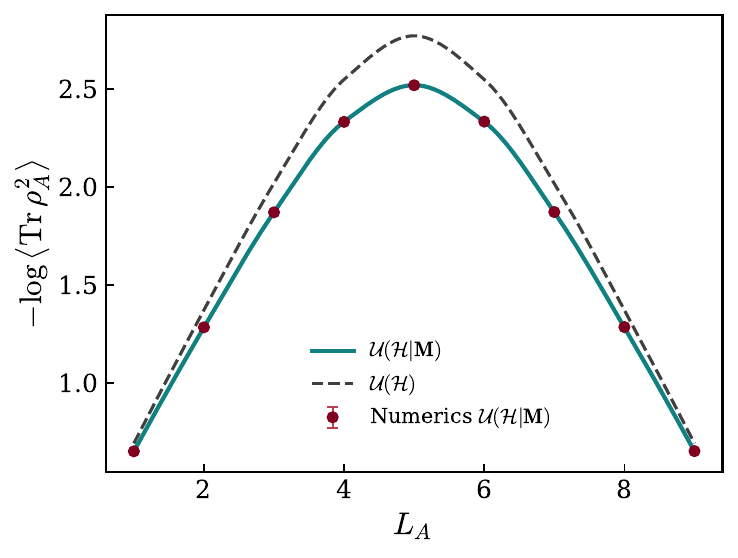}
    \caption{
Annealed second-Rényi Page curve $-\log \left\langle \Tr \rho_A^2 \right \rangle$ for a system of $L=10$ spin-$1/2$ degrees of freedom with fixed total charge $Q=4$.
The solid curve is obtained via Eq.~\eqref{eq:fixed_charge_average_purity}. The circles are numerical Haar samples in the fixed-charge sector. The gray dashed curve is the ordinary full-Haar Page curve on $\HH_A\otimes\HH_B$, shown for comparison.
}
\label{fig:fixed_charge_renyi_page}
\end{figure}

These results can be applied to several resource theories of quantum states~\cite{Chitambar_2019}, including entanglement~\cite{BianchiDona2019,BianchiDonaKumar2024,Russotto_2025,bianchi2026typicalentanglemententropycharge, MurcianoCalabresePiroli2022,LanglettRodriguezNieva2025,Hejazi_2022,Iizuka_2026}, magic~\cite{iannotti2026nonstabilizernessu1symmetrychaotic,Esposito_2024,Cepollaro_2024,Leone_2021,tirrito2025universalspreadingnonstabilizernessquantum}, coherence~\cite{Andreadakis_2023,aditya2026coherencedynamicsquantummanybody}, non-Gaussianity~\cite{ares2026nongaussianityrandomquantumstates}, and whenever the corresponding resource quantifier can be expressed in terms of finite moments of the state.

\subsection{Entanglement}
\label{subsec:entanglement}

We now apply the block-unitary formalism to the second-Rényi entropy of a
bipartite system in a fixed-charge sector. 
Let
\begin{equation}
    \HH=\HH_A\otimes\HH_B,
    \qquad
    Q=Q_A+Q_B ,
\end{equation}
and denote by $\HH_q$ the eigenspace of total charge $q$. The charge constraint
induces the decomposition
\begin{equation}
    \HH_q
    =
    \bigoplus_{a\in\mathcal I_q}
    \HH_A^{(a)}\otimes \HH_B^{(q-a)} ,
    \label{eq:fixed_charge_bipartite_decomp}
\end{equation}
where $a$ labels the charge in subsystem $A$, and $\mathcal I_q$ is the set of
allowed charge splittings. We write
\begin{equation}
    d_A(a):=\dim \HH_A^{(a)},\quad
    d_B(q-a):=\dim \HH_B^{(q-a)},
\end{equation}
\begin{equation}
    D_a:=d_A(a)d_B(q-a),
    \quad
    d_q:=\dim\HH_q=\sum_{a\in\mathcal I_q}D_a .
\end{equation}
For a Haar-random pure state $\psi_q=\ket{\psi_q}\bra{\psi_q}$ in $\HH_q$, the reduced state on $A$ is $\rho_A=\Tr_B \psi_q$ and its second-Rényi entropy is $S_2(A)=-\log \Tr \rho_A^2 $.
The fixed-charge Haar average~\cite{Liu_2020} gives
\begin{equation}
\small
    \E_q[\Tr \rho_A^2]
    =
    \frac{
    \sum_{a\in\mathcal I_q}
    d_A(a)d_B(q-a)
    \left[d_A(a)+d_B(q-a)\right]
    }{
    d_q(d_q+1)
    }
    \label{eq:fixed_charge_average_purity},
\end{equation}
with derivation in App.~\ref{app:fixed_charge_renyi2}.
This is the second-Rényi version of the fixed-charge Page computation of Bianchi--Donà~\cite{BianchiDona2019}. In their formulation, a Haar-random state in $\HH_q$ can be written as
\begin{equation}
    \ket{\psi_q}
    =
    \sum_{a\in\mathcal I_q}
    \sqrt{p_a}\,\ket{\phi_a},
    \qquad
    \ket{\phi_a}\in
    \HH_A^{(a)}\otimes\HH_B^{(q-a)} ,
    \label{eq:bianchi_dona_decomposition}
\end{equation}
where the $\ket{\phi_a}$ are independently Haar-distributed in the blocks and the weights $\boldsymbol p=(p_a)$ follow the Dirichlet distribution with parameters $D_a$. In the notation of this paper, the label $a$ is a color. For spin-$1/2$ systems of size $L=L_A+L_B$ with fixed magnetization, one may take
\begin{equation}
    d_A(a)=\binom{L_A}{\frac{L_A+a}{2}},
    \quad
    d_B(q-a)=\binom{L_B}{\frac{L_B+q-a}{2}},
\end{equation}
with the convention that the binomial coefficient is zero when its lower
argument is not an integer in the allowed range (see Fig.~\ref{fig:fixed_charge_renyi_page} for an example).

\section{Block-Haar Randomness in ETH}
\label{sec_ETH}

A conceptually different application of the framework developed here arises in the context of the Eigenstate Thermalization Hypothesis (ETH). ETH concerns the matrix elements of physical observables in the energy eigenbasis~\cite{Srednicki_1999},
\begin{equation}
    O_{ij} = \langle E_i | O | E_j \rangle, \qquad H |E_i\rangle = E_i |E_i\rangle ,
\end{equation}
whose pseudorandom behavior in chaotic many-body systems underlies the emergence of thermal expectation values. 
The statistical interpretation underlying ETH has a long history, connecting the foundations of pure-state statistical mechanics with random-matrix
approaches to quantum chaos~\cite{DAlessio_2016}. Rather than resolving the full microscopic dynamics, one may consider a fictitious ensemble of slightly perturbed Hamiltonians around a given system; averaging over such an ensemble provides a simpler statistical description while retaining the universal features of chaotic dynamics. In random matrix theory, these ensembles are taken to be invariant under changes of basis. Here we focus on the unitary symmetry class, appropriate in the absence of time-reversal symmetry~\footnote{In the presence of time-reversal
symmetry, the corresponding invariant ensemble may instead belong to the orthogonal or symplectic symmetry class.}. This invariance leads to eigenstates uniformly distributed on the unit sphere. \\
For realistic many-body Hamiltonians, however, such global randomness is too strong: eigenvectors cannot be regarded as uniformly distributed over the full Hilbert space, as their statistical properties retain a nontrivial dependence on energy. Randomness is instead constrained to small energy windows, within which nearby eigenstates can be effectively regarded as randomly mixed.

The randomness of the Hamiltonian eigenvectors is reflected in the statistical properties of the matrix elements of observables in the energy basis. In particular, an observable $O$ inherits a rotational invariance, for which its statistical properties are unchanged under the transformation $O\mapsto U^\dagger O U$. For realistic many-body systems this invariance should hold only locally in energy: the corresponding transformation is therefore not a global Haar rotation, but a block-unitary transformation.
This property, referred to as \emph{local rotational invariance} \cite{ vallini2025refinementseigenstatethermalizationhypothesis}, can be expressed as
\begin{align}
    O \sim O^U &= U_{\mathrm{loc}}^\dagger O  U_{\mathrm{loc}}  \\
\mathcal{H} = \bigoplus_{I=1}^{m} \mathcal{H}_I, \quad &\qquad U_{\mathrm{loc}} = \bigoplus_{I=1}^{m} U_I ,
    \label{eq:local_rotational_invariance}
\end{align}
where $U_I$ are independent Haar-random unitaries acting on subspaces $\mathcal{H}_I$ spanned by eigenstates belonging to the same mesoscopic energy window. In this way, $U_{\mathrm{loc}}$ defines an energy-resolved random ensemble whose blocks correspond to local regions of the spectrum, as depicted in Fig.~\ref{fig_ETH}. 
It is important to stress that $U_{\mathrm{loc}}$ does not represent the physical time evolution $e^{-iHt}$. Rather, it is a random basis transformation implementing the local-in-energy mixing entering the statistical description of observables. Its physical origin is therefore conceptually different from the symmetry-constrained transformations considered in the previous sections. Mathematically, however, the two problems share precisely the same block-Haar structure.

Local rotational invariance can be used for evaluating the multipoint matrix-element correlations entering the full ETH ansatz~\cite{FoiniKurchan2019}. Indeed, the $k$-point moments of the randomly rotated observable can be written directly as matrix elements of the block-Haar twirling map,
\begin{align}
\label{eq:ETH_twirl}
    &\overline{O^U_{i_1 j_1}\cdots O^U_{i_k j_k}} = \\  &\langle E_{i_1}\otimes\cdots\otimes E_{i_k}| \Phi_{U(\mathcal{H}|\mathbf{M})}^{ (k)}\left(O^{\otimes k}\right) |E_{j_1}\otimes\cdots\otimes E_{j_k}\rangle ,
    \notag
\end{align}
where the overline denotes the average over the block-unitaries ensemble.
In Ref.~\cite{vallini2025refinementseigenstatethermalizationhypothesis}, these averages were evaluated by explicitly resolving the contributions associated with different energy blocks. The block-Weingarten calculus developed here places these calculations within a systematic arbitrary-order framework: once the energy decomposition is specified, the corresponding moments follow directly from the symmetry-resolved Schur--Weyl structure
derived in Eq~\eqref{eq:WC}.

\begin{figure}
    \centering
    \includegraphics[width=0.8\linewidth]{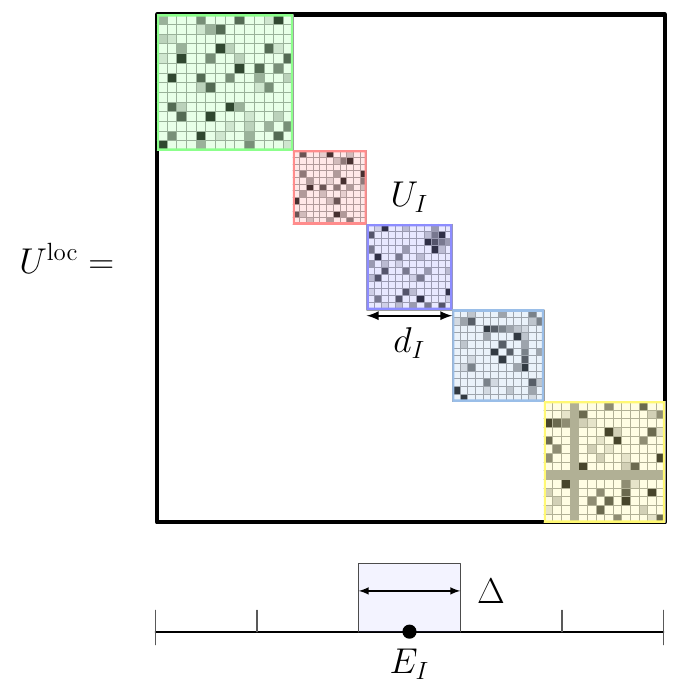}
    \caption{Block-Haar structure of $U_{\mathrm{loc}}$ in Eq.~\eqref{eq:local_rotational_invariance}. Each block $U_I$ acts on the energy eigenstates within a window of width $\Delta$ centered at $E_I$, and is independently Haar distributed over $\mathcal U(d_I)$. The lower axis shows the energy window associated with one of the blocks above.}
    \label{fig_ETH}
\end{figure}
To make contact with the notation of the previous sections, the colors now label mesoscopic energy windows rather than symmetry sectors. We can divide the energy spectrum in $m$ intervals of length $\Delta$ and identify each interval with a mesoscopic energy value $E_I, I=1\dots m$ which can be taken to be the middle value, as illustrated in Fig.~\ref{fig_ETH}. Each eigenstate is associated to a color specifying the energy window to which it belongs. We introduce the corresponding projectors
\begin{equation}
    \Pi_I = \sum_{i:\, E_i \in E^{\Delta}_I} |E_i\rangle\langle E_i|, \qquad d_I = \Tr \Pi_I .
    \label{eq:ETH_energy_projectors}
\end{equation}
with $E^{\Delta}_I = [ E_I - \Delta/2, E_I + \Delta/2)$. The block dimension $d_I$ counts the number of eigenstates contained in the corresponding energy windows.
The color constraints appearing in the block-Weingarten calculus therefore translate into constraints on the mesoscopic energies carried by the indices of the ETH correlation functions.

Evaluating Eq.~\eqref{eq:ETH_twirl} through the Weingarten calculus developed in Eq.~\eqref{eq:WC}, at leading order in the block dimensions, one directly obtains the correlation structures associated with local rotational invariance. For the global Haar ensemble~\cite{DAlessio_2016}, the same calculation gives energy-independent correlations, while the block-unitary ensemble naturally reproduces the energy dependence of the ETH ansatz. For instance, for $k=1,2$:
\begin{subequations}
\label{eq:ETH_ansatz}
\begin{align} \label{eq_ETH1}
    \overline{O_{ii}} &= K_1(E_I)\\ \label{eq_ETH2}
    \overline{O_{ij}O_{ji}} &= K^2_1(E_I) \delta_{ij} + K_2(E_I,E_J)
\end{align}
with $E_{i} \in E^\Delta_{I}$, $E_{j} \in E^\Delta_{J}$ and
\begin{align}
    K_1(E_I) &= \text{Tr}(\hat \Pi_I O)\\
    K_2(E_I,E_J) &= \text{Tr}(\hat \Pi_I O\hat \Pi_J O) -\frac{\delta_{IJ}}{d_I}K^2_1(E_I)
\end{align}
\end{subequations}
having defined $\hat \Pi_I = \Pi_I/d_I$. 
We notice that $\Pi_I O$ is the block of $O$ containing all the states associated to the mesoscopic energy $E_I$, so that traces are local on the block, for instance  $\text{Tr}( \Pi_I O) = \sum_{k: E_k \in E^\Delta_I} O_{kk}$. The quantities $K_k$ were introduced in
\cite{vallini2025refinementseigenstatethermalizationhypothesis} and referred to
as \textit{local free cumulants}, based on their connection with free probability, which provides a natural framework for describing correlations
between noncommuting random variables~\cite{NicaSpeicher2006}. 
We illustrate how to obtain Eqs.~\eqref{eq:ETH_ansatz} in App.~\ref{app:ETH}.

\section{Conclusion and outlook}
\label{sec_conclusion}
In this work, we developed a Weingarten calculus for Haar-random unitaries constrained by a decomposition of the Hilbert space. The central result is a symmetry-resolved analogue of Schur--Weyl duality for the Block-Unitary Ensemble. Rather than the ordinary permutation algebra associated with the full unitary group, the commutant is generated by color-preserving permutation diagrams between pairs of colored words, where each color labels one of the Hilbert-space sectors. This provides a constructive description of the invariant structures that survive the block-Haar average and turns arbitrary moments of Abelian symmetry-constrained random unitaries into a combinatorial problem.
The colored structure leads to a particularly simple organization of the corresponding Weingarten calculus. The Hilbert--Schmidt Gram matrix vanishes between diagrams with different boundary colored words, while within each fixed pair of boundaries it factorizes into a tensor product of ordinary unitary Gram matrices. 

We illustrated this formalism first for random quantum states subject to symmetry constraints.
We then show that the same mathematical structure arises from a conceptually different constraint in the local rotational invariance formulation of the Eigenstate Thermalization Hypothesis.
More broadly, the Block-Unitary Ensemble provides a natural reference ensemble whenever randomness is compatible with a decomposition of Hilbert space rather than with the full unitary group.

This work may provide a broader calculus for structured randomness in quantum many-body systems, in which the constraints imposed by symmetries, energy resolution, or restricted gate sets are incorporated directly into the random ensemble rather than treated as corrections to fully Haar-random dynamics.

Several directions follow from this perspective. \\
This framework can be systematically applied to fluctuations of symmetry-resolved entanglement~\cite{BianchiDona2019,BianchiDonaKumar2024}, symmetry-constrained scrambling and out-of-time-order correlators~\cite{Gisti_2025, KhemaniVishwanathHuse2018, RakovszkyPollmannKeyserlingk2018}, symmetry-preserving unitary designs~\cite{Hearth_2025,Mitsuhashi_2025,Marvian2022}, and randomized measurement protocols performed in the presence of conservation laws~\cite{Low2022}. \\
On the thermalization side, recent developments have connected higher-order ETH correlations with free probability~\cite{NicaSpeicher2006}, with free cumulants providing a natural organization of the smooth multipoint correlations entering full ETH~\cite{Pappalardi_2022,FritzschProsenPappalardi2025,AlvesFritzschClaeys2025,JindalHosur2024,ValliniPappalardi2026,Fritzsch2026}.
The leading-order symmetry-resolved Weingarten calculus could provide a systematic connection between higher-order ETH correlations and operator-valued free probability, in analogy with the way ordinary Weingarten calculus connects the global Haar-random setting to scalar-valued free probability. A second open direction is to formulate a smooth version of the Block-Unitary Ensemble directly at the level of its commutant structure, replacing sharp energy windows by a continuously energy-resolved description~\cite{vallini2025refinementseigenstatethermalizationhypothesis}. Finally, it would be interesting to extend the present framework to ETH in the presence of symmetries. While Abelian conserved quantities can be incorporated by working within fixed symmetry sectors~\cite{LeBlondRigol2020,Onoda2026,FukushimaHamazaki2023,Gomez2026}, non-Abelian symmetries require a more refined treatment due to the presence of noncommuting conserved charges~\cite{Murthy2023, PatilRigol2025,Lasek_2026,JindalHosur2026}.\\
It would also be interesting to determine the corresponding diagrammatic and moment compact structures for other block ensembles, including orthogonal and symplectic Haar ensembles~\cite{CollinsSniady2006} as well as symmetry-preserving Clifford and random Gaussian unitaries ensembles~\cite{Bittel_2026,Gross_2021,MitsuhashiYoshioka2023,walter2025randompurificationchannelarbitrary, sierant2026theorymatchgatecommutant,braccia2026commutantfermionicgaussianunitaries}.

\section{Acknowledgments}
The authors are truly grateful to Silvia Pappalardi for insightful discussions and helpful suggestions throughout the development of this work.
They also thank Alioscia Hamma, Gianluca Esposito, Xhek Turkeshi, Beatrice Magni, Angelo Russotto and Arnau Lira Solanilla for their valuable feedbacks.  
D.I. acknowledges the support of Research Grants in Germany, 2026 (57812126) from the German Academic Exchange Service (DAAD). E.V. is supported by the Deutsche Forschungsgemeinschaft (DFG, German Research Foundation) under Germany’s Excellence Strategy - Cluster of Excellence Matter and Light for Quantum Computing (ML4Q) EXC 2004/1 -390534769, and DFG Collaborative Research Center (CRC) 183 Project No. 277101999 - project B02.

\bibliography{reference}
\bibliographystyle{apsrev4-1}

\appendix

\section{Proof of Theorem \ref{thm:block_commutant}}\label{app:additional-proofs}
\begin{proof}
Since the projectors $\Pi_i$ are mutually orthogonal and satisfy
$\sum_{i=1}^m \Pi_i=\id_{\HH}$, the multifactor projectors
\begin{equation}
    \Pi_{\boldsymbol{\alpha}}
    =
    \Pi_{\alpha_1}\otimes\cdots\otimes\Pi_{\alpha_k},
    \qquad
    \boldsymbol{\alpha}\in\{1,\dots,m\}^k ,
\end{equation}
are mutually orthogonal and resolve the identity on $\HH^{\otimes k}$:
\begin{equation}
    \Pi_{\boldsymbol{\alpha}}\Pi_{\boldsymbol{\beta}}
    =
    \delta_{\boldsymbol{\alpha},\boldsymbol{\beta}}
    \Pi_{\boldsymbol{\alpha}},
    \quad
    \sum_{\boldsymbol{\alpha}\in\{1,\dots,m\}^k}
    \Pi_{\boldsymbol{\alpha}}
    =
    \id_{\HH^{\otimes k}} .
\end{equation}
Thus every $X\in\BB(\HH^{\otimes k})$ decomposes into colored-word blocks
\begin{equation}
    X=\sum_{\boldsymbol{\alpha},\boldsymbol{\beta}}
    X_{\boldsymbol{\alpha}\boldsymbol{\beta}},
    \quad X_{\boldsymbol{\alpha}\boldsymbol{\beta}} :=\Pi_{\boldsymbol{\alpha}}X\Pi_{\boldsymbol{\beta}} \in \BB(\HH_{\boldsymbol{\beta}},\HH_{\boldsymbol{\alpha}}).
\end{equation}
Let $U=\bigoplus_{i=1}^m U_i\in \mathcal{U}(\HH|\textbf{M})$. Since $U$ is block
diagonal, every $\Pi_i$ commutes with $U$, and hence every
$\Pi_{\boldsymbol{\alpha}}$ commutes with $U^{\otimes k}$. Moreover, the
restriction of $U^{\otimes k}$ to $\HH_{\boldsymbol{\alpha}}$ is
\begin{equation}
    U_{\boldsymbol{\alpha}}
    :=
    U_{\alpha_1}\otimes\cdots\otimes U_{\alpha_k}.
\end{equation}
Therefore $X\in\Comm_k(\mathcal{U}(\HH|\textbf{M}))$ if and only if each block satisfies
\begin{equation}
    U_{\boldsymbol{\alpha}}
    X_{\boldsymbol{\alpha}\boldsymbol{\beta}}
    =
    X_{\boldsymbol{\alpha}\boldsymbol{\beta}}
    U_{\boldsymbol{\beta}}
    \qquad
    \forall U\in \mathcal{U}(\HH|\textbf{M}) .
    \label{eq:app_block_intertwiner_condition}
\end{equation}

We first show that no block can connect two words with different occupation
vectors. For $\theta=(\theta_1,\dots,\theta_m)\in\mathbb{R}^m$, consider the
central block unitary
\begin{equation}
    Z(\theta)
    :=
    \bigoplus_{i=1}^m e^{i\theta_i}\id_{\HH_i}
    \in \mathcal{U}(\HH|\textbf{M}) .
\end{equation}
On $\HH_{\boldsymbol{\alpha}}$, it acts as the scalar
\begin{equation}
    Z(\theta)_{\boldsymbol{\alpha}}
    =
    e^{i\sum_i n_i(\boldsymbol{\alpha})\theta_i}
    \id_{\HH_{\boldsymbol{\alpha}}}.
\end{equation}
Substituting $Z(\theta)$ into
Eq.~\eqref{eq:app_block_intertwiner_condition} gives
\begin{equation}
    e^{i\sum_i n_i(\boldsymbol{\alpha})\theta_i}
    X_{\boldsymbol{\alpha}\boldsymbol{\beta}}
    =
    e^{i\sum_i n_i(\boldsymbol{\beta})\theta_i}
    X_{\boldsymbol{\alpha}\boldsymbol{\beta}}
    \qquad
    \forall \theta\in\mathbb{R}^m .
\end{equation}
If $n(\boldsymbol{\alpha})\neq n(\boldsymbol{\beta})$, the two characters are
different, and the above identity forces
$X_{\boldsymbol{\alpha}\boldsymbol{\beta}}=0$. Hence every element of the
commutant is block diagonal with respect to the occupation decomposition
\begin{equation}
    \HH^{\otimes k}
    =
    \bigoplus_{\substack{n\in\mathbb{N}^m\\ |n|=k}}
    \HH_n,
    \qquad
    \HH_n
    :=
    \bigoplus_{\boldsymbol{\alpha}:\,n(\boldsymbol{\alpha})=n}
    \HH_{\boldsymbol{\alpha}} .
\end{equation}
It remains to identify the blocks with fixed occupation vector. Fix $\boldsymbol{\alpha},\boldsymbol{\beta}\in\{1,\dots,m\}^k$ such that $n(\boldsymbol{\alpha})=n(\boldsymbol{\beta})=n$. For each color $i$, write
\begin{align}
    P_i(\boldsymbol{\alpha})
    &=
    \{p_{i,1}<\cdots<p_{i,n_i}\},\\
    P_i(\boldsymbol{\beta})
    &=
    \{q_{i,1}<\cdots<q_{i,n_i}\},
\end{align}
where $P_i(\boldsymbol{\alpha})$ is the set of positions occupied by color $i$ in $\boldsymbol{\alpha}$, and similarly for $\boldsymbol{\beta}$.
Equivalently, we can write it as
\begin{equation}
    P_i(\boldsymbol{\alpha})
    :=
    \{r\in\{1,\ldots,k\}:\alpha_r=i\}.
\end{equation}
Define
\begin{equation}
    K_n
    :=
    \bigotimes_{i=1}^m \HH_i^{\otimes n_i},
\end{equation}
with the convention $\HH_i^{\otimes 0}=\C$. Let $T_{\boldsymbol{\alpha}}:\HH_{\boldsymbol{\alpha}}\to K_n$ be the unitary
which groups tensor factors according to their color:
\begin{equation}
    T_{\boldsymbol{\alpha}}
    \ket{x_1\otimes\cdots\otimes x_k}
    =
    \bigotimes_{i=1}^m
    \ket{x_{p_{i,1}}\otimes\cdots\otimes x_{p_{i,n_i}}}.
\end{equation}
Then $ T_{\boldsymbol{\alpha}} U_{\boldsymbol{\alpha}}T_{\boldsymbol{\alpha}}^{-1}= \bigotimes_{i=1}^m U_i^{\otimes n_i}$
and the same relation holds with $\boldsymbol{\beta}$.
Let $A\in\BB(\HH_{\boldsymbol{\beta}},\HH_{\boldsymbol{\alpha}})$ satisfy
Eq.~\eqref{eq:app_block_intertwiner_condition}, and set $Y:= T_{\boldsymbol{\alpha}} A T_{\boldsymbol{\beta}}^{-1} \in \BB(K_n)$, then
\begin{equation}
    \left(\bigotimes_{i=1}^m U_i^{\otimes n_i}\right)Y
    =
    Y\left(\bigotimes_{i=1}^m U_i^{\otimes n_i}\right),
\end{equation}
$\forall (U_1,\dots,U_m)\in \mathcal{U}(\HH|\textbf{M}) $.
Since the factors $U_i$ vary independently, the commutant on $K_n$ factorizes:
\begin{equation}
\begin{aligned}
    &
    \left\{
        Y\in\BB(K_n):
        \left[
            Y,\bigotimes_{i=1}^m U_i^{\otimes n_i}
        \right]=0
        \ \forall (U_1,\dots,U_m)
    \right\}
    \\
    &\qquad =
    \bigotimes_{i=1}^m
    \left\{
        Y_i\in\BB(\HH_i^{\otimes n_i}):
        [Y_i,U_i^{\otimes n_i}]=0
        \ \forall U_i\in\mathcal{U}(\HH_i)
    \right\}.
\end{aligned}
\end{equation}
By the ordinary finite-dimensional Schur--Weyl commutant theorem,
\begin{equation}
\begin{aligned}
     \{Y_i\in\BB(\HH_i^{\otimes n_i})&:[Y_i,U_i^{\otimes n_i}]=0 \ \forall U_i\in\mathcal{U}(\HH_i)
    \}\\
    &=
    \Span\{\V_{\tau_i}^{(i)}:\tau_i\in S_{n_i}\},
\end{aligned}
\end{equation}
where $\V_{\tau_i}^{(i)}$ permutes the $n_i$ copies of $\HH_i$. Therefore
\begin{equation}
    A
    \in
    \Span\left\{
        T_{\boldsymbol{\alpha}}^{-1}
        \left(
            \bigotimes_{i=1}^m \V_{\tau_i}^{(i)}
        \right)
        T_{\boldsymbol{\beta}}
        :
        \tau_i\in S_{n_i}
    \right\}.
\end{equation}
We now identify these reordered permutations with projected global permutations. Given $\tau_i\in S_{n_i}$ for each $i$, define
$\sigma\in S_k$ by
\begin{equation}
    \sigma(q_{i,b})
    =
    p_{i,\tau_i(b)}
    \quad
    \forall i=1,\dots,m,\quad b=1,\dots,n_i .
\end{equation}
Then $\sigma$ maps the positions of color $i$ in $\boldsymbol{\beta}$ to the positions of color $i$ in $\boldsymbol{\alpha}$. Equivalently,
\begin{equation}
    \alpha_r=\beta_{\sigma^{-1}(r)}
    \qquad
    \forall r=1,\dots,k .
    \label{eq:app_colour_preserving_condition}
\end{equation}
Using the convention
\begin{equation}
    \V_\sigma
    \ket{x_1,\dots,x_k}
    =
    \ket{x_{\sigma^{-1}(1)},\dots,x_{\sigma^{-1}(k)}},
\end{equation}
one checks on elementary tensors that
\begin{equation}
    T_{\boldsymbol{\alpha}}^{-1}
    \left(
        \bigotimes_{i=1}^m \V_{\tau_i}^{(i)}
    \right)
    T_{\boldsymbol{\beta}}
    =
    \Pi_{\boldsymbol{\alpha}}\V_\sigma\Pi_{\boldsymbol{\beta}} .
\end{equation}
Indeed, if $r=p_{i,a}$, then
$\sigma^{-1}(r)=q_{i,\tau_i^{-1}(a)}$, so both sides place the tensor $x_{q_{i,\tau_i^{-1}(a)}}$ in the output position $p_{i,a}$. Conversely, any $\sigma\in S_k$ satisfying
Eq.~\eqref{eq:app_colour_preserving_condition} maps the positions of each color in $\boldsymbol{\beta}$ bijectively onto the positions of the same color in $\boldsymbol{\alpha}$. It therefore determines a unique tuple $(\tau_1,\dots,\tau_m)$ with $\tau_i\in S_{n_i}$. Hence, for fixed $\boldsymbol{\alpha},\boldsymbol{\beta}$ with
$n(\boldsymbol{\alpha})=n(\boldsymbol{\beta})$, the corresponding intertwiner space is spanned by
\begin{equation}
    \Pi_{\boldsymbol{\alpha}}\V_\sigma\Pi_{\boldsymbol{\beta}},
    \qquad
    \alpha_r=\beta_{\sigma^{-1}(r)}
    \quad \forall r .
\end{equation}
Summing over all pairs of colored words gives
\begin{equation}
    \Comm_k(\mathcal{U}(\HH|\textbf{M}))
    \subseteq \mathcal{D}_k
\end{equation}
where $\mathcal{D}_k=\Span\left\{
       \Pi_{\boldsymbol{\alpha}}\V_\sigma\Pi_{\boldsymbol{\beta}}:
        \sigma\in S_k,\,
        \alpha_r=\beta_{\sigma^{-1}(r)}
        \ \forall r
    \right\} $ is the linear span of the color-preserving permutation diagrams.
We prove the reverse inclusion. For every $U\in \mathcal{U}(\HH|\textbf{M})$, the projectors $\Pi_{\boldsymbol{\alpha}}$ and $\Pi_{\boldsymbol{\beta}}$ commute with $U^{\otimes k}$. Moreover,
\begin{equation}
    \V_\sigma U^{\otimes k}
    =
    U^{\otimes k}\V_\sigma,
\end{equation}
because the same operator $U$ acts on each tensor factor. Therefore
\begin{equation}
\begin{aligned}
    \Pi_{\boldsymbol{\alpha}}\V_\sigma
    \Pi_{\boldsymbol{\beta}}U^{\otimes k}
    &=
    \Pi_{\boldsymbol{\alpha}}\V_\sigma
    U^{\otimes k}\Pi_{\boldsymbol{\beta}}
    =
    \Pi_{\boldsymbol{\alpha}}U^{\otimes k}
    \V_\sigma\Pi_{\boldsymbol{\beta}}  \\
    &=
    U^{\otimes k}
    \Pi_{\boldsymbol{\alpha}}\V_\sigma\Pi_{\boldsymbol{\beta}}.
\end{aligned}
\end{equation}
Thus every color-preserving projected permutation belongs to
$\Comm_k(\mathcal{U}(\HH|\textbf{M}))$.
It remains only to rewrite the result in occupation-sector form. For fixed $\boldsymbol{\beta}$ and $\sigma$, the permutation $\V_\sigma$ maps
$\HH_{\boldsymbol{\beta}}$ onto $\HH_{\boldsymbol{\gamma}}$, where $\gamma_r:=\beta_{\sigma^{-1}(r)}$.
Whenever this holds, $n(\boldsymbol{\alpha})=n(\boldsymbol{\beta})$, since
$\sigma$ only permutes tensor positions. Conversely, if
$n(\boldsymbol{\alpha})=n(\boldsymbol{\beta})$, then the nonzero operators
$\Pi_{\boldsymbol{\alpha}}\V_\sigma\Pi_{\boldsymbol{\beta}}$ are precisely the permutations that map the word $\boldsymbol{\beta}$ to the word
$\boldsymbol{\alpha}$. Since all blocks with different occupation vector
vanish, the commutant decomposes as
\begin{equation}
\small
    \Comm_k(\mathcal{U}(\HH|\textbf{M}))
    =
    \bigoplus_{\substack{n\in\mathbb{N}^m\\ |n|=k}}
    \Span
    \left\{
        \Pi_{\boldsymbol{\alpha}}\V_\sigma\Pi_{\boldsymbol{\beta}}:
        n(\boldsymbol{\alpha})=n(\boldsymbol{\beta})=n
    \right\}.
\end{equation}
Terms in the last span that are not color preserving are zero, so including
them does not change the span.
\end{proof}

\section{Proof of Proposition \ref{prop:block_commutant_gram}}
\label{app:Gramm_sum}

\begin{proof}
We defined for a colored word $\boldsymbol{\alpha}$ and each color $i=1,\ldots,m$ the set of positions occupied by color $i$ as
\begin{equation}
    P_i(\boldsymbol{\alpha})
    :=
    \{r\in\{1,\ldots,k\}:\alpha_r=i\}.
\end{equation}
We first show explicitly how the color-preserving condition induces the restricted bijections in Eq.~\eqref{eq:colour_restricted_sigma}, that is
\begin{align}
    \sigma_i
    &:=
    \left.\sigma\right|_{P_i(\boldsymbol{\beta})}
    :
    P_i(\boldsymbol{\beta})
    \longrightarrow
    P_i(\boldsymbol{\alpha}), \\ 
    \tau_i
    &:=
    \left.\tau\right|_{P_i(\boldsymbol{\delta})}
    :
    P_i(\boldsymbol{\delta})
    \longrightarrow
    P_i(\boldsymbol{\gamma}).
\end{align}
Let $q\in P_i(\boldsymbol{\beta})$, so that $\beta_q=i$. Setting $r=\sigma(q)$ and using $\alpha_r=\beta_{\sigma^{-1}(r)}$, we obtain
\begin{equation}
    \alpha_{\sigma(q)}
    =
    \beta_q
    =
    i.
\end{equation}
Hence $\sigma(q)\in P_i(\boldsymbol{\alpha})$, and therefore $\sigma$ maps $P_i(\boldsymbol{\beta})$ into $P_i(\boldsymbol{\alpha})$. Conversely, if $r\in P_i(\boldsymbol{\alpha})$, then
\begin{equation}
    \beta_{\sigma^{-1}(r)}
    =
    \alpha_r
    =
    i,
\end{equation}
so that $\sigma^{-1}(r)\in P_i(\boldsymbol{\beta})$. Thus $\sigma_i$ is a bijection between the positions carrying color $i$ on the two boundary words. The same argument applies to $\tau_i$.

We now evaluate the Gram element. \\
From $\Pi_{\boldsymbol{\alpha}}^\dagger=\Pi_{\boldsymbol{\alpha}}$ and $\V_\sigma^\dagger=\V_{\sigma^{-1}}$, we have
$
    D_{\boldsymbol{A}}^\dagger
    =
    \Pi_{\boldsymbol{\beta}}
    \V_{\sigma^{-1}}
    \Pi_{\boldsymbol{\alpha}},
$
and hence
\begin{equation}
    G_{\boldsymbol{A},\boldsymbol{B}}
    =
    \Tr\!\left(
        \Pi_{\boldsymbol{\beta}}
        \V_{\sigma^{-1}}
        \Pi_{\boldsymbol{\alpha}}
        \Pi_{\boldsymbol{\gamma}}
        \V_\tau
        \Pi_{\boldsymbol{\delta}}
    \right).
    \label{eq:gram_proof_first_step}
\end{equation}
Since projectors corresponding to distinct colored words are orthogonal, i.e. $\Pi_{\boldsymbol{\alpha}}
    \Pi_{\boldsymbol{\gamma}} =\delta_{\boldsymbol{\alpha},\boldsymbol{\gamma}}
    \Pi_{\boldsymbol{\alpha}}$, 
the Gram element vanishes unless $\boldsymbol{\alpha}=\boldsymbol{\gamma}$. We may therefore write
\begin{equation}
    G_{\boldsymbol{A},\boldsymbol{B}}
    =
    \delta_{\boldsymbol{\alpha},\boldsymbol{\gamma}}
    \Tr\!\left(
        \Pi_{\boldsymbol{\beta}}
        \V_{\sigma^{-1}}
        \Pi_{\boldsymbol{\alpha}}
        \V_\tau
        \Pi_{\boldsymbol{\delta}}
    \right).
\end{equation}

For a color-preserving diagram, the permutation operator maps the tensor-product sector associated with the lower boundary word onto that associated with the upper boundary word. In particular,
\begin{equation}
    \V_{\sigma^{-1}}\Pi_{\boldsymbol{\alpha}}
    =
    \Pi_{\boldsymbol{\beta}}\V_{\sigma^{-1}},
    \qquad
    \Pi_{\boldsymbol{\alpha}}\V_\tau
    =
    \V_\tau\Pi_{\boldsymbol{\delta}}.
\end{equation}
Using these identities and the idempotency of the projectors, we obtain
\begin{align}
    G_{\boldsymbol{A},\boldsymbol{B}}
    &=
    \delta_{\boldsymbol{\alpha},\boldsymbol{\gamma}}
    \Tr\!\left(
        \Pi_{\boldsymbol{\beta}}
        \V_{\sigma^{-1}\tau}
        \Pi_{\boldsymbol{\delta}}
    \right)
    \nonumber
\end{align}
Orthogonality of the boundary projectors therefore gives
\begin{equation}
    G_{\boldsymbol{A},\boldsymbol{B}}
    =
    \delta_{\boldsymbol{\alpha},\boldsymbol{\gamma}}
    \delta_{\boldsymbol{\beta},\boldsymbol{\delta}}
    \Tr\!\left(
        \Pi_{\boldsymbol{\beta}}
        \V_{\sigma^{-1}\tau}
    \right).
    \label{eq:gram_relative_permutation}
\end{equation}

We are thus left with two diagrams having identical boundary colored words. In this case, for every color $i$, the maps
\begin{equation}
    \sigma_i,\tau_i:
    P_i(\boldsymbol{\beta})
    \longrightarrow
    P_i(\boldsymbol{\alpha})
\end{equation}
are bijections between the same two sets. Their relative composition is therefore a permutation of the positions carrying color $i$,
\begin{equation}
    \sigma_i^{-1}\tau_i
    \in
    S_{n_i}\!\left(P_i(\boldsymbol{\beta})\right),
\end{equation}
and the global relative permutation decomposes color by color as
\begin{equation}
    \left.\sigma^{-1}\tau\right|_{P_i(\boldsymbol{\beta})}
    =
    \sigma_i^{-1}\tau_i.
    \label{eq:relative_permutation_colour_restriction}
\end{equation}
To evaluate the remaining trace, choose an orthonormal basis $\{\ket{x}^{(i)}\}_{x=1}^{d_i}$ of each color sector $\HH_i$. A basis vector in the tensor-product sector selected by $\Pi_{\boldsymbol{\beta}}$ has the form
\begin{equation}
    \ket{x_1,\ldots,x_k},
    \qquad
    \ket{x_r}\in\HH_{\beta_r}.
\end{equation}
Using the convention
\begin{equation}
    \V_\rho
    \ket{x_1,\ldots,x_k}
    =
    \ket{x_{\rho^{-1}(1)},\ldots,x_{\rho^{-1}(k)}},
\end{equation}
the diagonal matrix element entering the trace is nonzero precisely when
\begin{equation}
    x_r
    =
    x_{\rho^{-1}(r)},
    \qquad
    \rho=\sigma^{-1}\tau.
\end{equation}
Thus all basis indices belonging to the same cycle of $\rho$ are identified. Since $\rho$ preserves every set $P_i(\boldsymbol{\beta})$, each cycle has a definite color. A cycle of color $i$ leaves one independent basis index in $\HH_i$ and therefore contributes a factor $d_i$. Consequently,
\begin{equation}
    \Tr\!\left(
        \Pi_{\boldsymbol{\beta}}
        \V_{\sigma^{-1}\tau}
    \right)
    =
    \prod_{i=1}^{m}
    d_i^{\#\operatorname{cycles}(\sigma_i^{-1}\tau_i)}.
\end{equation}
Substitution into Eq.~\eqref{eq:gram_relative_permutation} yields
\begin{equation}
    G_{\boldsymbol{A},\boldsymbol{B}}
    =
    \delta_{\boldsymbol{\alpha},\boldsymbol{\gamma}}
    \delta_{\boldsymbol{\beta},\boldsymbol{\delta}}
    \prod_{i=1}^{m}
    d_i^{\#\operatorname{cycles}(\sigma_i^{-1}\tau_i)},
\end{equation}
which proves the result.
\end{proof}

\section{Proof of Corollary~\ref{cor:gram_decomposition}}
\label{app:C}
\begin{proof}
Fix two boundary colored words $\boldsymbol{\alpha}$ and $\boldsymbol{\beta}$ with common occupation vector $n(\boldsymbol{\alpha})=n(\boldsymbol{\beta})=(n_1,\ldots,n_m)$. For each color $i$, let $P_i(\boldsymbol{\alpha})$ and $P_i(\boldsymbol{\beta})$ denote the corresponding sets of positions on the two boundaries. As shown above, color preservation implies that the restriction of $\sigma$ to color $i$ defines a bijection
\begin{equation}
    \sigma_i:P_i(\boldsymbol{\beta})\longrightarrow P_i(\boldsymbol{\alpha}).
\end{equation}
Choose bijections
\begin{align}
    \xi_{\boldsymbol{\alpha},i}:\{1,\ldots,n_i\}\longrightarrow P_i(\boldsymbol{\alpha}),
    \\
    \xi_{\boldsymbol{\beta},i}:\{1,\ldots,n_i\}\longrightarrow P_i(\boldsymbol{\beta}),
\end{align}
and define
\begin{equation}
    \widehat{\sigma}_i
    :=
    \xi_{\boldsymbol{\alpha},i}^{-1}\ \sigma_i\ \xi_{\boldsymbol{\beta},i}.
\end{equation}
Since $\widehat{\sigma}_i$ is a bijection from $\{1,\ldots,n_i\}$ to itself, it defines a permutation $\widehat{\sigma}_i\in S_{n_i}$. Hence, for fixed boundary words $(\boldsymbol{\alpha},\boldsymbol{\beta})$, every color-preserving diagram is parametrized by a tuple
\begin{equation}
    (\widehat{\sigma}_1,\ldots,\widehat{\sigma}_m)
    \in
    S_{n_1}\times\cdots\times S_{n_m}.
\end{equation}

Consider now two color-preserving diagrams with the same boundary words, associated with permutations $\sigma$ and $\tau$. For each color $i$, define analogously
\begin{equation}
    \widehat{\tau}_i
    :=
    \xi_{\boldsymbol{\alpha},i}^{-1} \ \tau_i\ \xi_{\boldsymbol{\beta},i}
    \in S_{n_i}.
\end{equation}
Their relative permutations satisfy
\begin{equation}
    \widehat{\sigma}_i^{-1}\widehat{\tau}_i
    =
    \xi_{\boldsymbol{\beta},i}^{-1} \ 
    \left(\sigma_i^{-1}\tau_i\right)\ 
    \xi_{\boldsymbol{\beta},i}.
\end{equation}
Therefore $\widehat{\sigma}_i^{-1}\widehat{\tau}_i$ and $\sigma_i^{-1}\tau_i$ differ only by a relabelling of the positions and have the same cycle structure,
\begin{equation}
    \#\operatorname{cycles}\!\left(\widehat{\sigma}_i^{-1}\widehat{\tau}_i\right)
    =
    \#\operatorname{cycles}\!\left(\sigma_i^{-1}\tau_i\right).
\end{equation}

Consider the restriction of the Gram matrix to the block associated with fixed boundary words $(\boldsymbol{\alpha},\boldsymbol{\beta})$. Thus, for
\begin{equation}
    \boldsymbol{A}=(\boldsymbol{\alpha},\boldsymbol{\beta};\sigma),
    \qquad
    \boldsymbol{B}=(\boldsymbol{\alpha},\boldsymbol{\beta};\tau),
\end{equation}
the boundary Kronecker deltas in Proposition~\ref{prop:block_commutant_gram} are identically equal to one.

Using Proposition~\ref{prop:block_commutant_gram} and the equality of the cycle structures under relabelling, the Gram matrix elements within this fixed-boundary block are
\begin{equation}
    G^{(\boldsymbol{\alpha},\boldsymbol{\beta})}_{\boldsymbol{A},\boldsymbol{B}}
    =
    \prod_{i=1}^{m}
    d_i^{\#\operatorname{cycles}(\widehat{\sigma}_i^{-1}\widehat{\tau}_i)}
    =
    \prod_{i=1}^{m}
    G(n_i,d_i)_{\widehat{\sigma}_i,\widehat{\tau}_i}.
\end{equation}
where
\begin{equation}
    G(n_i,d_i)_{\widehat{\sigma}_i,\widehat{\tau}_i}
    =
    d_i^{\#\operatorname{cycles}(\widehat{\sigma}_i^{-1}\widehat{\tau}_i)}
\end{equation}
is the ordinary unitary Gram matrix on $n_i$ replicas in dimension $d_i$. Since the matrix elements factorize independently over the colors, the corresponding Gram block is
\begin{equation}
    G^{(\boldsymbol{\alpha},\boldsymbol{\beta})}
    \cong
    \bigotimes_{i=1}^{m}G(n_i,d_i).
\end{equation}

It remains to account for the different boundary words. For a fixed occupation vector $n$, the number of colored words with occupations $(n_1,\ldots,n_m)$ is
\begin{equation}
    N_{n}
    =
    \frac{k!}{\prod_{i=1}^{m}n_i!}.
\end{equation}
The upper and lower boundary words can be chosen independently, giving $N_{n}^2$ boundary pairs with the same occupation vector. Proposition~\ref{prop:block_commutant_gram} further shows that Gram matrix elements vanish between diagrams associated with distinct pairs of boundary words. Therefore the Gram matrix is block diagonal in $(\boldsymbol{\alpha},\boldsymbol{\beta})$, and each of the $N_{n}^2$  blocks associated with a fixed occupation vector $n$ is isomorphic to $\bigotimes_i G(n_i,d_i)$. This yields
\begin{equation}
    G
    \cong
    \bigoplus_{\substack{n,\\ |n|=k}}
    \left[
        \bigotimes_{i=1}^{m} G(n_i,d_i)
    \right]^{\oplus N_{n}^2}.
\end{equation}
\end{proof}
We further show that the dimensions are coherent. 
For $d_i\geq k$, the ordinary Gram matrix $G(n_i,d_i)$ is indexed by permutations in $S_{n_i}$ and therefore has $n_i!$ rows and columns. Hence, for a fixed occupation vector $n$, the tensor-product block has dimension $\prod_{i=1}^m n_i!$,
and, taking into account its multiplicity $N_{n}^2$, contributes 
\begin{equation} 
N_{n}^2\prod_{i=1}^m n_i! = \frac{(k!)^2}{\prod_{i=1}^m n_i!} = k!\binom{k}{n_1,\ldots,n_m} 
\end{equation} rows and columns. Summing over all occupation vectors with $|n|=k$ and using the multinomial theorem gives 
\begin{equation} 
k! \sum_{\substack{n_1+\cdots+n_m=k\\ n_i\geq0}} \binom{k}{n_1,\ldots,n_m} = k!m^k. \end{equation} 
Therefore the full Gram matrix has $k!m^k$ rows and columns.\\

Let us recall the example introduced in Sec.~\ref{sec_Gram}, where we considered the color-restricted maps $\sigma_i$ and $\tau_i$. Here, we represent these maps as ordinary permutations by relabeling the corresponding position sets. This allows us to make the construction fully explicit and to illustrate how the individual color sectors combine in the corresponding Gram-matrix block.

For $k=5$ and $m\geq 3$, consider the colored words
\begin{equation}
\begin{aligned}
\boldsymbol{\alpha}
&=
\bigl(
\cpdiagdot{cpdiag color 2},
\cpdiagdot{cpdiag color 1},
\cpdiagdot{cpdiag color 3},
\cpdiagdot{cpdiag color 2},
\cpdiagdot{cpdiag color 1}
\bigr),
\\
\boldsymbol{\beta}
&=
\bigl(
\cpdiagdot{cpdiag color 3},
\cpdiagdot{cpdiag color 1},
\cpdiagdot{cpdiag color 2},
\cpdiagdot{cpdiag color 1},
\cpdiagdot{cpdiag color 2}
\bigr).
\end{aligned}
\label{app:eq:alpha_beta_restricted_example}
\end{equation}
Both words have occupation vector
\begin{equation}
n(\boldsymbol{\alpha})
=
n(\boldsymbol{\beta})
=
\bigl(
        \cpnum{cpdiag color 2}{2},
        \cpnum{cpdiag color 1}{2},
        \cpnum{cpdiag color 3}{1},
        \cpnum{cpdiag color 4}{0},
        \cdots,
        \cpnum{cpdiag color 5}{0}
    \bigr).
\end{equation}
The corresponding position sets are
\begin{align}
P_{\cpdiagdot{cpdiag color 1}}(\boldsymbol{\alpha})
    &=\{2,5\},
&
P_{\cpdiagdot{cpdiag color 1}}(\boldsymbol{\beta})
    &=\{2,4\},
\nonumber\\
P_{\cpdiagdot{cpdiag color 2}}(\boldsymbol{\alpha})
    &=\{1,4\},
&
P_{\cpdiagdot{cpdiag color 2}}(\boldsymbol{\beta})
    &=\{3,5\},
\nonumber\\
P_{\cpdiagdot{cpdiag color 3}}(\boldsymbol{\alpha})
    &=\{3\},
&
P_{\cpdiagdot{cpdiag color 3}}(\boldsymbol{\beta})
    &=\{1\}.
\label{app:eq:position_sets_example}
\end{align}
We consider two color-preserving permutations,
\begin{equation}
    \sigma=(13)(254),
    \qquad
    \tau=(1345)(2),
\label{app:eq:sigma_tau_example}
\end{equation}
both mapping the boundary word $\boldsymbol{\beta}$ to
$\boldsymbol{\alpha}$.

Their restrictions to each color sector are bijections
\begin{equation}
    \sigma_i,\tau_i:
    P_i(\boldsymbol{\beta})
    \longrightarrow
    P_i(\boldsymbol{\alpha})\ .
\end{equation}
For the first color, one obtains
\begin{align}
\sigma_{\cpdiagdot{cpdiag color 1}}:
\{2,4\}&\longrightarrow\{2,5\},
&
2&\mapsto5,
&
4&\mapsto2,
\nonumber\\
\tau_{\cpdiagdot{cpdiag color 1}}:
\{2,4\}&\longrightarrow\{2,5\},
&
2&\mapsto2,
&
4&\mapsto5.
\label{app:eq:restricted_blue}
\end{align}
For the second color,
\begin{align}
\sigma_{\cpdiagdot{cpdiag color 2}}:
\{3,5\}&\longrightarrow\{1,4\},
&
3&\mapsto1,
&
5&\mapsto4,
\nonumber\\
\tau_{\cpdiagdot{cpdiag color 2}}:
\{3,5\}&\longrightarrow\{1,4\},
&
3&\mapsto4,
&
5&\mapsto1,
\label{app:eq:restricted_red}
\end{align}
while for the third color,
\begin{align}
\sigma_{\cpdiagdot{cpdiag color 3}}:
\{1\}&\longrightarrow\{3\},
&
1&\mapsto3,
\nonumber\\
\tau_{\cpdiagdot{cpdiag color 3}}:
\{1\}&\longrightarrow\{3\},
&
1&\mapsto3.
\label{app:eq:restricted_green}
\end{align}

Although $\sigma_i$ and $\tau_i$ are naturally bijections between
subsets of replica positions, they can equivalently be represented as
ordinary permutations after canonically relabeling these subsets.
Explicitly, in the present example, we consider the order-preserving bijections
\begin{align}
\xi_{\cpdiagdot{cpdiag color 1},\boldsymbol{\alpha}}
    &: \{1,2\}\longrightarrow\{2,5\},
&
1&\mapsto2,\quad 2\mapsto5,
\nonumber\\
\xi_{\cpdiagdot{cpdiag color 1},\boldsymbol{\beta}}
    &: \{1,2\}\longrightarrow\{2,4\},
&
1&\mapsto2,\quad 2\mapsto4,
\\[1mm]
\xi_{\cpdiagdot{cpdiag color 2},\boldsymbol{\alpha}}
    &: \{1,2\}\longrightarrow\{1,4\},
&
1&\mapsto1,\quad 2\mapsto4,
\nonumber\\
\xi_{\cpdiagdot{cpdiag color 2},\boldsymbol{\beta}}
    &: \{1,2\}\longrightarrow\{3,5\},
&
1&\mapsto3,\quad 2\mapsto5,
\\[1mm]
\xi_{\cpdiagdot{cpdiag color 3},\boldsymbol{\alpha}}
    &: \{1\}\longrightarrow\{3\},
&
1&\mapsto3,
\nonumber\\
\xi_{\cpdiagdot{cpdiag color 3},\boldsymbol{\beta}}
    &: \{1\}\longrightarrow\{1\},
&
1&\mapsto1.
\end{align}

We can therefore associate with each restricted map an ordinary
permutation
\begin{equation}
\widehat{\sigma}_i
:=
\xi_{i,\boldsymbol{\alpha}}^{-1}
\sigma_i
\xi_{i,\boldsymbol{\beta}}\ ,
\qquad
\widehat{\tau}_i
:=
\xi_{i,\boldsymbol{\alpha}}^{-1}
\tau_i
\xi_{i,\boldsymbol{\beta}}\ ,
\label{app:eq:hatted_permutations}
\end{equation}
with
$
\widehat{\sigma}_i,\widehat{\tau}_i\in S_{n_i}
$.

The relabeled maps are
\begin{align}
&\widehat{\sigma}_{\cpdiagdot{cpdiag color 1}}=s,
\qquad
\widehat{\tau}_{\cpdiagdot{cpdiag color 1}}=e. \\
&\widehat{\sigma}_{\cpdiagdot{cpdiag color 2}}=e,
\qquad
\widehat{\tau}_{\cpdiagdot{cpdiag color 2}}=s. \\
&\widehat{\sigma}_{\cpdiagdot{cpdiag color 3}}
=
\widehat{\tau}_{\cpdiagdot{cpdiag color 3}}
=
e.
\end{align}

The relative permutations entering the Gram matrix are obtained by
composing the restricted maps,
\begin{equation}
\sigma_i^{-1}\tau_i:
P_i(\boldsymbol{\beta})
\longrightarrow
P_i(\boldsymbol{\beta})\ .
\end{equation}
Under the above relabeling, these are represented by
\begin{equation}
\widehat{\sigma}_i^{-1}\widehat{\tau}_i
=
\xi_{i,\boldsymbol{\beta}}^{-1}
\left(
\sigma_i^{-1}\tau_i \right)
\xi_{i,\boldsymbol{\beta}}\ .
\label{app:eq:relative_relabeling}
\end{equation}
Thus,
\begin{align}
\widehat{\sigma}_{\cpdiagdot{cpdiag color 1}}^{-1}
\widehat{\tau}_{\cpdiagdot{cpdiag color 1}}
=s\ ,
\qquad 
\widehat{\sigma}_{\cpdiagdot{cpdiag color 2}}^{-1}
\widehat{\tau}_{\cpdiagdot{cpdiag color 2}}
=s\ ,
\qquad 
\widehat{\sigma}_{\cpdiagdot{cpdiag color 3}}^{-1}
\widehat{\tau}_{\cpdiagdot{cpdiag color 3}}
=e\ .
\label{app:eq:relative_permutations_example}
\end{align}

Equivalently, before relabeling, the relative maps act directly on the
original replica positions as
\begin{align}
\sigma_{\cpdiagdot{cpdiag color 1}}^{-1}
\tau_{\cpdiagdot{cpdiag color 1}}
&:
\begin{cases}
2\mapsto4,\\
4\mapsto2,
\end{cases}
&
\sigma_{\cpdiagdot{cpdiag color 2}}^{-1}
\tau_{\cpdiagdot{cpdiag color 2}}
&:
\begin{cases}
3\mapsto5,\\
5\mapsto3,
\end{cases}
\nonumber\\
\sigma_{\cpdiagdot{cpdiag color 3}}^{-1}
\tau_{\cpdiagdot{cpdiag color 3}}
&:
1\mapsto1\ .
\end{align}

Since conjugation by the relabeling maps $\xi_{i,\boldsymbol{\beta}}$
does not change the cycle structure, one has
\begin{equation}
\#\operatorname{cycles}(\sigma_i^{-1}\tau_i)
=
\#\operatorname{cycles}
\left(
\widehat{\sigma}_i^{-1}\widehat{\tau}_i
\right)\ .
\label{app:eq:cycle_relabeling_invariance}
\end{equation}

In the present example, the boundary words
$\boldsymbol{\alpha}$ and $\boldsymbol{\beta}$ are fixed, and we have selected two particular color-preserving permutations $\sigma$ and $\tau$. We are therefore computing a single matrix element of the Gram block associated with these boundary words. For this pair,
\begin{equation}
G_{(\boldsymbol{\alpha},\boldsymbol{\beta};\sigma),
  (\boldsymbol{\alpha},\boldsymbol{\beta};\tau)}
=
\prod_{i=1}^{m}
d_i^{\#\operatorname{cycles}(\sigma_i^{-1}\tau_i)}
=
d_{\cpdiagdot{cpdiag color 1}}\
d_{\cpdiagdot{cpdiag color 2}}\
d_{\cpdiagdot{cpdiag color 3}}\ .
\label{eq:app_example_gram_element}
\end{equation}

If instead one keeps the boundary words $\boldsymbol{\alpha},\boldsymbol{\beta}$ fixed and allows $\sigma$ and $\tau$ to run over all compatible color-preserving permutations, one obtains the full Gram block associated with $(\boldsymbol{\alpha},\boldsymbol{\beta})$. After the order-preserving
relabeling introduced above, the compatible permutations are
parametrized independently by
\begin{equation}
S_{n_1}\times\cdots\times S_{n_m} \ ,
\end{equation}
and the corresponding Gram block factorizes over the color sectors.
Finally varying the boundary words reconstructs the full Gram matrix: different pairs of boundary words with the same occupation numbers give identical blocks, while varying the occupation numbers generates the different block types.

\section{Proof of Lemma \ref{lemma:pure_state_block_weingarten}}
\begin{proof}
Let $\ket{\phi_i(U_i)}:=U_i\ket{\eta_i} \in\HH_i$, then
\begin{equation}
    \ket{\psi(U)}^{\otimes k}
    =
    \sum_{\boldsymbol{\alpha}\in\{1,\dots,m\}^k}
    \sqrt{p_{\boldsymbol{\alpha}}}\,
    \ket{\phi_{\alpha_1}(U_{\alpha_1})}
    \otimes\cdots\otimes
    \ket{\phi_{\alpha_k}(U_{\alpha_k})},
\end{equation}
where $p_{\boldsymbol{\alpha}}:=\prod_{r=1}^k p_{\alpha_r}$.
Hence
\begin{equation}
    \left(\proj{\psi(U)}\right)^{\otimes k}
    =
    \sum_{\boldsymbol{\alpha},\boldsymbol{\beta}}
    \sqrt{
        p_{\boldsymbol{\alpha}}
        p_{\boldsymbol{\beta}}
    }\,
    \ket{\Phi_{\boldsymbol{\alpha}}(U)}
    \bra{\Phi_{\boldsymbol{\beta}}(U)}.
\end{equation}
where
$\ket{\Phi_{\boldsymbol{\alpha}}(U)}
=
\bigotimes_{r=1}^k
\ket{\phi_{\alpha_r}(U_{\alpha_r})}$.
The Haar average of the term associated with
$(\boldsymbol{\alpha},\boldsymbol{\beta})$
vanishes unless
$n(\boldsymbol{\alpha})=n(\boldsymbol{\beta})$.
For $n(\boldsymbol{\alpha})=n(\boldsymbol{\beta})=n$, one has
$\sqrt{p_{\boldsymbol{\alpha}}p_{\boldsymbol{\beta}}}
=\prod_{i=1}^m p_i^{n_i}$.
Applying the reorderings $T_{\boldsymbol{\alpha}}$ and
$T_{\boldsymbol{\beta}}$, the Haar average then factorizes over the colors:
\begin{equation}
\begin{aligned}
    &
    T_{\boldsymbol{\alpha}}
    \left[
        \int dU\,
        \Pi_{\boldsymbol{\alpha}}
        \left(\proj{\psi(U)}\right)^{\otimes k}
        \Pi_{\boldsymbol{\beta}}
    \right]
    T_{\boldsymbol{\beta}}^{-1}
    \\
    &\qquad =
    \prod_{i=1}^m p_i^{n_i}
    \bigotimes_{i=1}^m
    \int_{\mathcal{U}(\HH_i)}dU_i\,
    \left(
        U_i\proj{\eta_i} U_i^\dagger
    \right)^{\otimes n_i}.
\end{aligned}
\label{eq:pure_state_reordered_average}
\end{equation}
For a Haar-random pure state in a $d_i$-dimensional Hilbert space,
\begin{equation}
    \int_{\mathcal{U}(\HH_i)}dU_i\,
    \left(
        U_i\proj{\eta_i}U_i^\dagger
    \right)^{\otimes n_i}
    =
    \frac{1}{(d_i)_{n_i}}
    \sum_{\tau_i\in S_{n_i}}\V_{\tau_i}^{(i)}.
    \label{eq:haar_state_moment_single_sector}
\end{equation}
Therefore
\begin{equation}
    M_k(\boldsymbol p)
    =
    \sum_{\boldsymbol A\in\mathcal{C}_k}
    \frac{
        p^{n(\boldsymbol A)}
    }{
        \prod_{i=1}^m (d_i)_{n_i(\boldsymbol A)}
    }
    D_{\boldsymbol A}.
\end{equation}
\end{proof}

\section{Proof of Corollary \ref{cor:Averaging_weights}}
\begin{proof}
Since 
\begin{equation}
    \E_{\boldsymbol p}\left[
        \prod_{i=1}^m p_i^{n_i}
    \right]
    =
    \frac{\prod_{i=1}^m(d_i)_{n_i}}{(d)_k},
    \quad
    |n|=k,
\end{equation}
the proof follows directly
\end{proof}

\section{Second-Rényi entropy in a fixed-charge sector}
\label{app:fixed_charge_renyi2}

We derive Eq.~\eqref{eq:fixed_charge_average_purity}. Let
$\Pi_q$ be the projector onto the fixed-charge sector $\HH_q$,
\begin{equation}
    \Pi_q
    =
    \sum_{a\in\mathcal I_q}
    \Pi_A^{(a)}\otimes \Pi_B^{(q-a)} ,
    \label{eq:app_fixed_charge_projector}
\end{equation}
where $\Pi_A^{(a)}$ projects onto $\HH_A^{(a)}$ and
$\Pi_B^{(q-a)}$ projects onto $\HH_B^{(q-a)}$.
For a Haar-random pure state in $\HH_q$, the second moment is
\begin{equation}
    \E_q[\psi_q^{\otimes 2}]
    =
    \frac{
    \Pi_q^{\otimes 2}
    \left(\id+\V_A\V_B\right)
    \Pi_q^{\otimes 2}
    }{
    d_q(d_q+1)
    },
    \label{eq:app_fixed_charge_second_moment}
\end{equation}
where $\V_A$ swaps the two copies of subsystem $A$, and $\V_B$ swaps the two
copies of subsystem $B$. Since $\Pi_q^{\otimes 2}$ commutes with the total swap
$\V_A\V_B$, we will omit redundant projectors in the trace manipulations below.

The purity of the reduced state can be written with the swap trick as
\begin{equation}
    \Tr \rho_A^2
    =
    \Tr\left[
    \V_A\,\psi_q^{\otimes 2}
    \right].
    \label{eq:app_swap_trick_purity}
\end{equation}
Combining Eqs.~\eqref{eq:app_fixed_charge_second_moment} and
\eqref{eq:app_swap_trick_purity} gives
\begin{equation}
\begin{split}
    \E_q[\Tr \rho_A^2]
    &=
    \frac{
    \Tr\left[\V_A\Pi_q^{\otimes 2}\right]
    +
    \Tr\left[\V_A\V_A\V_B\Pi_q^{\otimes 2}\right]
    }{
    d_q(d_q+1)
    }\\
    &= \frac{
    \Tr\left[\V_A\Pi_q^{\otimes 2}\right]
    +
    \Tr\left[\V_B\Pi_q^{\otimes 2}\right]
    }{
    d_q(d_q+1)
    } .
\end{split}
\end{equation}
where we used $\V_A^2=\id$.
We now evaluate the two traces. Using
Eq.~\eqref{eq:app_fixed_charge_projector}, 
we are left with 
\begin{equation}
    \Tr_A\left[
    \V_A
    \left(
    \Pi_A^{(a)}\otimes \Pi_A^{(a')}
    \right)
    \right]
    =
    \delta_{a,a'}d_A(a),
\end{equation}
and 
\begin{equation}
    \Tr_B\left[
    \Pi_B^{(q-a)}\otimes \Pi_B^{(q-a')}
    \right]
    =
    d_B(q-a)d_B(q-a').
\end{equation}
Therefore
\begin{equation}
    \Tr\left[\V_A\Pi_q^{\otimes 2}\right]
    =
    \sum_{a\in\mathcal I_q}
    d_A(a)d_B(q-a)^2 ,
    \label{eq:app_trace_swap_A}
\end{equation}
and similarly for $b$.
In the end, we are left with
\begin{equation}
    \E_q[\Tr \rho_A^2]
    =
    \frac{
    \sum_{a\in\mathcal I_q}
    \left[
    d_A(a)d_B(q-a)^2
    +
    d_A(a)^2d_B(q-a)
    \right]
    }{
    d_q(d_q+1)
    } .
\end{equation}
The same result follows from the Bianchi--Donà parametrization. Write
\begin{equation}
    \ket{\psi_q}
    =
    \sum_{a\in\mathcal I_q}
    \sqrt{p_a}\,\ket{\phi_a},
    \qquad
    \ket{\phi_a}\in
    \HH_A^{(a)}\otimes\HH_B^{(q-a)} ,
\end{equation}
with $\ket{\phi_a}$ independently Haar-distributed in each block and
$\boldsymbol p$ Dirichlet-distributed with parameters
$D_a=d_A(a)d_B(q-a)$. Since different $a$ sectors are orthogonal on $A$,
\begin{equation}
    \rho_A
    =
    \bigoplus_{a\in\mathcal I_q}
    p_a \rho_A^{(a)},
    \qquad
    \rho_A^{(a)}
    =
    \Tr_B \ket{\phi_a}\bra{\phi_a}.
\end{equation}
Hence
\begin{equation}
    \Tr \rho_A^2
    =
    \sum_{a\in\mathcal I_q}
    p_a^2\Tr\left[(\rho_A^{(a)})^2\right].
\end{equation}
For a Haar-random state in
$\HH_A^{(a)}\otimes\HH_B^{(q-a)}$,
\begin{equation}
    \E_{\phi_a}
    \Tr\left[(\rho_A^{(a)})^2\right]
    =
    \frac{d_A(a)+d_B(q-a)}{D_a+1}.
\end{equation}
For the Dirichlet distribution with parameters $D_a$,
\begin{equation}
    \E_{\boldsymbol p}[p_a^2]
    =
    \frac{D_a(D_a+1)}{d_q(d_q+1)} .
\end{equation}
Combining the last two equations gives
\begin{equation}
    \E_q[\Tr \rho_A^2]
    =
    \sum_{a\in\mathcal I_q}
    \frac{D_a(D_a+1)}{d_q(d_q+1)}
    \frac{d_A(a)+d_B(q-a)}{D_a+1},
\end{equation}
which reduces again to Eq.~\eqref{eq:fixed_charge_average_purity}.

\section{ETH ansatz calculations}
\label{app:ETH}
We prove Eq.~\eqref{eq:ETH_ansatz} evaluating explicitely, for $k=1,2$,  
\begin{align}
        &\overline{O^U_{i_1 j_1}\cdots O^U_{i_k j_k}} = \\  &\langle E_{i_1}\otimes\cdots\otimes E_{i_k}| \Phi_{U(\mathcal{H}|\mathbf{M})}^{ (k)}\left(O^{\otimes k}\right) |E_{j_1}\otimes\cdots\otimes E_{j_k}\rangle \ ,
    \notag
\end{align}
through 
\begin{equation}
\begin{split}
    \Phi_{\mathcal{U}(\HH|\textbf{M})}^{ (k)}(O^{\otimes k })&= \sum_{\boldsymbol{A},\boldsymbol{B} \in \mathcal{C}_k} \Wg_{\boldsymbol{A},\boldsymbol{B}} \Tr(D^\dagger_{\boldsymbol{A}} O^{\otimes k}) D_{\boldsymbol{B}}\ , 
\end{split}
\label{app:WC}
\end{equation}
with $D_{\boldsymbol{A}}= D_{\boldsymbol{\alpha},\boldsymbol{\beta}}^\sigma:= \Pi_{\boldsymbol{\alpha}}\V_\sigma\Pi_{\boldsymbol{\beta}} \in \Comm_k(\mathcal{U}(\HH|\textbf{M}))$, $\boldsymbol{A}=(\boldsymbol{\alpha}, \boldsymbol{\beta};\sigma)$.

We highlight, since will be relevant for the calculation, that the projectors on the energy windows as defined in Eq.~\eqref{eq:ETH_energy_projectors} have matrix elements on the energy basis which are 
\begin{equation}
    \langle E_i |\Pi_\alpha| E_j\rangle = \sum_{k:E_k \in E_\alpha^\Delta} \langle E_i|E_k\rangle\langle E_k|E_j\rangle = \delta_{\alpha J} \delta_{ij} \ ,
\end{equation}
having defined $J$ so that $E_j \in E^\Delta_J$. In the same way for tensor products.

\subsection{$k=1$}

We start from the simplest case, $k=1$. Since $S_1=\{e\}$ contains only the identity permutation, the color-preserving condition implies that  $\alpha=\beta$ so that the set of admissible triples reduces to
\begin{equation}
    \mathcal{C}_1=\{(\alpha,\alpha;e):\alpha=1,\ldots,m\}\ .
\end{equation}
The corresponding commutant elements are therefore
\begin{equation}
    D^e_{\alpha,\alpha}=\Pi_\alpha \mathbb{V}_e\Pi_\alpha=\Pi_\alpha\ ,
\end{equation}
and hence
\begin{equation}
    \operatorname{Com}_1\!\left(\mathcal{U}(\mathcal{H}|\mathcal{M})\right)=\operatorname{span}\{\Pi_\alpha:\alpha=1,\ldots,m\}\ .
\end{equation}
In other words, at first order the only operators surviving the block-Haar
average are the $m$ projectors onto the individual color sectors,
corresponding to single-replica color-preserving identity diagrams, which can
carry any of the $m$ possible colors.

At the level of the Gram matrix, we can write explicitly the $k=1$
specialization of the block decomposition derived above. There are $m$
possible occupation patterns,
\begin{equation}
    n
    =
    (0,\ldots,0,1,0,\ldots,0)\ ,
    \qquad
    \alpha=1,\ldots,m\ ,
\end{equation}
each containing a single replica of color $\alpha$. For each occupation
pattern, $N_n=1$, so that there is a single one-dimensional
Gram block
\begin{equation}
    G(1,d_\alpha)=[d_\alpha]\ .
\end{equation}
Consequently, the full Gram matrix decomposes as
\begin{equation}
    G
    \cong
    \bigoplus_{\alpha=1}^{m} G(1,d_\alpha)
    =
    \bigoplus_{\alpha=1}^{m}[d_\alpha]\ ,
\end{equation}
or, equivalently, it is diagonal
\begin{equation}
    G_{\boldsymbol{A},\boldsymbol{B}}
    =
    \delta_{\boldsymbol{A}\boldsymbol{B}}\,d_\alpha \ .
\end{equation}
Its inverse has the same block structure,
\begin{equation}
    \Wg
    \cong
    \bigoplus_{\alpha=1}^{m}
    \left[\frac{1}{d_\alpha}\right]\ ,
    \qquad
    \Wg_{\boldsymbol{A},\boldsymbol{B}}
    =
    \frac{\delta_{\boldsymbol{A}\boldsymbol{B}}}{d_\alpha}\ .
\end{equation}

We can now evaluate explicitly the first-order twirling map from Eq.~\eqref{app:WC}. The sums over color-preserving diagrams reduce to sums over the sector labels,
\begin{align}
    \Phi_{\mathcal{U}(\HH|\textbf{M})}^{(1)}(O)
&=\sum_{\boldsymbol{A},\boldsymbol{B}\in\mathcal{C}_1}\frac{\delta_{\boldsymbol{A}\boldsymbol{B}}}{d_\alpha}\,\Tr(D^\dagger_{\boldsymbol{A}}O)\,D_{\boldsymbol{B}} \\
    &=\sum_{\alpha=1}^{m}\frac{\Tr(\Pi_\alpha O)}{d_\alpha}\Pi_\alpha\ . 
\end{align}

Taking matrix elements in the energy eigenbasis, considering $E_i\in E^\Delta_I$ and $E_j\in E^\Delta_J$,
\begin{align}
\overline{O^U_{ij}}=\sum_{\alpha=1}^{m}\frac{\operatorname{Tr}\!\left(\Pi_{\alpha}O\right)}{d_{\alpha}} \delta_{\alpha J } \delta_{ij}&=\delta_{ij} \operatorname{Tr}\!\left(\hat \Pi_{I}O\right)\\
    &= \delta_{ij} K_1(E_I)\ ,
\end{align} 
as in Eq.~\eqref{eq_ETH1}.

\subsection{$k=2$}

We now consider the second moment, $k=2$. In this case $S_2=\{e,s\}$, where $s=(12)$ exchanges the two replicas. Since $e^{-1}=e$ and $s^{-1}=s$, both permutation operators are Hermitian, $\V_e^\dagger=\V_e$ and $\V_s^\dagger=\V_s$. The color-preserving diagrams naturally separate into two classes according to their color occupation pattern: monochromatic diagrams, in which the two replicas carry the same color, and mixed-color diagrams, in which they carry two distinct colors. This decomposition is precisely reflected in the block structure of the Gram matrix, and in the second-order twirling map~\eqref{app:WC} which is separated into monochromatic and mixed-color contributions,
\begin{equation}
    \Phi_{\mathcal{U}(\HH|\textbf{M})}^{(2)}(O^{\otimes2})
    =
    \Phi_{\mathrm{mono}}^{(2)}(O^{\otimes2})
    +
    \Phi_{\mathrm{mix}}^{(2)}(O^{\otimes2}).
\end{equation}

\paragraph{Monochromatic sector.}
The monochromatic sector corresponds to occupation patterns with two replicas
of the same color,
\begin{equation}
    n
    =
    (0,\ldots,0,2,0,\ldots,0),
    \qquad
    \alpha=1,\ldots,m.
\end{equation}
There are therefore $m$ such occupation patterns.
For each of them
$N_n=1$, while the two permutations $e,s\in S_2$ give the
two color-preserving diagrams, indicated here with the shorthand notation
\begin{equation}
    \boldsymbol{A}_{\alpha,e}
    =
    ((\alpha,\alpha),(\alpha,\alpha);e),
    \quad
    \boldsymbol{A}_{\alpha,s}
    =
    ((\alpha,\alpha),(\alpha,\alpha);s),
\end{equation}
with corresponding commutant elements
\begin{equation}
    D_{\boldsymbol{A}_{\alpha,e}}
    =
    \Pi_\alpha^{\otimes2},
    \qquad
    D_{\boldsymbol{A}_{\alpha,s}}
    =
    \Pi_\alpha^{\otimes2}\V_s \Pi^{\otimes 2}_\alpha.
\end{equation}

Each occupation pattern therefore gives one ordinary $2\times2$ unitary
Gram block,
\begin{equation}
    G(2,d_\alpha)
    =
    \begin{pmatrix}
        d_\alpha^2 & d_\alpha\\
        d_\alpha & d_\alpha^2
    \end{pmatrix}\ ,
\end{equation}
so that the monochromatic part of the Gram matrix consists of $m$
two-dimensional blocks,
\begin{equation}
    G_{\mathrm{mono}}
    \cong
    \bigoplus_{\alpha=1}^{m}G(2,d_\alpha)\ .
\end{equation}

The inverse of each block gives the corresponding Weingarten block,
\begin{equation}
    \Wg(2,d_\alpha)
    =
    G(2,d_\alpha)^{-1}
    =
    \frac{1}{d_\alpha^2-1}
    \begin{pmatrix}
        1 & -d_\alpha^{-1}\\
        -d_\alpha^{-1} & 1
    \end{pmatrix}\ ,
\end{equation}
and therefore
\begin{equation}
    \Wg_{\mathrm{mono}}
    \cong
    \bigoplus_{\alpha=1}^{m}\Wg(2,d_\alpha)\ .
\end{equation}
Equivalently,
\begin{align}
    \Wg_{\boldsymbol{A}_{\alpha,e},\boldsymbol{A}_{\alpha,e}}
    &=
    \Wg_{\boldsymbol{A}_{\alpha,s},\boldsymbol{A}_{\alpha,s}}
    =
    \frac{1}{d_\alpha^2-1} \ ,
    \\
    \Wg_{\boldsymbol{A}_{\alpha,e},\boldsymbol{A}_{\alpha,s}}
    &=
    \Wg_{\boldsymbol{A}_{\alpha,s},\boldsymbol{A}_{\alpha,e}}
    =
    -\frac{1}{d_\alpha(d_\alpha^2-1)} \ .
\end{align}

Using the Weingarten block above in Eq.~\eqref{app:WC}, the monochromatic
contribution to the second-order channel reads
\begin{align}
     &\Phi_{\mathrm{mono}}^{(2)}(O^{\otimes2})=
    \\
    & =
    \sum_{\alpha=1}^{m}
    \sum_{\sigma,\tau\in\{e,s\}}
    \Wg_{\boldsymbol{A}_{\alpha,\sigma},
        \boldsymbol{A}_{\alpha,\tau}}
    \Tr\!\left(
        D_{\boldsymbol{A}_{\alpha,\sigma}}^\dagger O^{\otimes2}
    \right)
    D_{\boldsymbol{A}_{\alpha,\tau}}\ . \notag
\end{align}
The two contractions
entering the monochromatic channel are
\begin{align}
    \Tr\!\left(
        D_{\boldsymbol{A}_{\alpha,e}}^\dagger O^{\otimes2}
    \right)
    &=
    \Tr(\Pi_\alpha O)^2\ ,
    \\ 
    \Tr\!\left(
        D_{\boldsymbol{A}_{\alpha,s}}^\dagger O^{\otimes2}
    \right)
    &=
    \Tr(\Pi_\alpha O\Pi_\alpha O)\ .
\end{align}
The monochromatic contribution to the second-order channel is therefore
\begin{align}
    &\Phi_{\mathrm{mono}}^{(2)}(O^{\otimes2}) = 
    \\&=
    \sum_{\alpha=1}^{m}\frac{1}{d_\alpha^2-1}
    \Bigg[
        \left(
            \Tr(\Pi_\alpha O)^2
            -
            \frac{1}{d_\alpha}
            \Tr(\Pi_\alpha O\Pi_\alpha O)
        \right)
        \Pi_\alpha^{\otimes2}
        \nonumber\\
        &\qquad \qquad +
        \left(
            \Tr(\Pi_\alpha O\Pi_\alpha O)
            -
            \frac{1}{d_\alpha}
            \Tr(\Pi_\alpha O)^2
        \right)
        \Pi_\alpha^{\otimes2}\V_s
    \Bigg]. \notag
\end{align}

\paragraph{Mixed-color sector.}
The mixed-color sector corresponds instead to occupation patterns containing
one replica of each of two distinct colors. For every unordered pair
$\{\alpha,\beta\}$, with $\alpha\neq\beta$, the corresponding occupation
pattern has
\begin{equation}
    n_\alpha=n_\beta=1\ ,
    \qquad \quad 
    n_\gamma=0 \ ,
    \quad
    \gamma\neq\alpha,\beta\ .
\end{equation}
There are $\binom{m}{2}$ such occupation patterns. For each of them,
\begin{equation}
    N_{n}=\frac{2!}{1!1!}=2 \ ,
\end{equation}
corresponding to the two boundary colored words
$(\boldsymbol \alpha, \boldsymbol \beta)$ and $(\boldsymbol \beta, \boldsymbol\alpha)$.
Hence, each occupation pattern gives $N_{n}^2=4$ one-dimensional Gram blocks.
The four color-preserving diagrams are
\begin{align}
    \boldsymbol{A}_{\alpha\beta,e}
    &=
    ((\alpha,\beta),(\alpha,\beta);e),
    &
    \boldsymbol{A}_{\beta\alpha,e}
    &=
    ((\beta,\alpha),(\beta,\alpha);e)\ ,
    \\
    \boldsymbol{A}_{\alpha\beta,s}
    &=
    ((\alpha,\beta),(\beta,\alpha);s),
    &
    \boldsymbol{A}_{\beta\alpha,s}
    &=
    ((\beta,\alpha),(\alpha,\beta);s)\ .
\end{align}
The corresponding commutant elements are
\begin{align}
    D_{\boldsymbol{A}_{\alpha\beta,e}}
    &=
    \Pi_\alpha\otimes\Pi_\beta\ ,
    &
    D_{\boldsymbol{A}_{\beta\alpha,e}}
    &=
    \Pi_\beta\otimes\Pi_\alpha\ ,
    \\
    D_{\boldsymbol{A}_{\alpha\beta,s}}
    &=
    (\Pi_\alpha\otimes\Pi_\beta)\V_s\ ,
    &
    D_{\boldsymbol{A}_{\beta\alpha,s}}
    &=
    (\Pi_\beta\otimes\Pi_\alpha)\V_s\ .
\end{align}
In particular, the swap diagrams are related by Hermitian conjugation,
\begin{equation}
    D_{\boldsymbol{A}_{\alpha\beta,s}}^\dagger
    =
    D_{\boldsymbol{A}_{\beta\alpha,s}}\ .
\end{equation}
Since the permutation group associated with each color is $S_1$, each fixed
pair of boundary words admits a unique color-preserving permutation.
Consequently, the four diagrams above belong to four independent
one-dimensional Gram blocks,
\begin{equation}
    G(1,d_\alpha)\otimes G(1,d_\beta)
    =
    [d_\alpha d_\beta]\ .
\end{equation}
Thus, each mixed-color occupation pattern contributes four scalar blocks,
and
\begin{equation}
    G_{\mathrm{mix}}
    \cong
    \bigoplus_{\alpha<\beta}
    [d_\alpha d_\beta]^{\oplus4}\ .
\end{equation}
The inverse of each block gives the corresponding Weingarten coefficient,
\begin{equation}
    \Wg(1,d_\alpha;1,d_\beta)
    =
    \frac{1}{d_\alpha d_\beta}\ ,
\end{equation}
so that
\begin{equation}
    \Wg_{\mathrm{mix}}
    \cong
    \bigoplus_{\alpha<\beta}
    \left[\frac{1}{d_\alpha d_\beta}\right]^{\oplus4}\ .
\end{equation}
Equivalently, for each of the four mixed-color diagrams,
\begin{equation}
    \Wg_{\boldsymbol{A},\boldsymbol{A}}
    =
    \frac{1}{d_\alpha d_\beta}\ ,
\end{equation}
while matrix elements between distinct Gram blocks vanish.

Using the Weingarten blocks above in Eq.~\eqref{app:WC}, the mixed-color
contribution to the second-order channel reads
\begin{align}
    &\Phi_{\mathrm{mix}}^{(2)}(O^{\otimes2}) =
    \\ &= \sum_{\substack{\alpha,\beta=1\\ \alpha\neq\beta}}^{m}
    \sum_{\sigma\in\{e,s\}}
    \Wg_{\boldsymbol{A}_{\alpha\beta,\sigma},
        \boldsymbol{A}_{\alpha\beta,\sigma}}
    \Tr\!\left(
        D_{\boldsymbol{A}_{\alpha\beta,\sigma}}^\dagger O^{\otimes2}
    \right)
    D_{\boldsymbol{A}_{\alpha\beta,\sigma}}\ . \notag
\end{align}
The two contractions
entering the mixed-color channel are
\begin{align}
    \Tr\!\left(
        D_{\boldsymbol{A}_{\alpha\beta,e}}^\dagger O^{\otimes2}
    \right)
    &=
    \Tr(\Pi_\alpha O)\Tr(\Pi_\beta O)\ ,
    \\
    \Tr\!\left(
        D_{\boldsymbol{A}_{\alpha\beta,s}}^\dagger O^{\otimes2}
    \right)
    &=
    \Tr(\Pi_\alpha O\Pi_\beta O)\ .
\end{align}
The mixed-color contribution to the second-order channel is therefore
\begin{align}
    &\Phi_{\mathrm{mix}}^{(2)}(O^{\otimes2}) = 
    \\&=
    \sum_{\substack{\alpha,\beta=1\\ \alpha\neq\beta}}^{m}
    \frac{1}{d_\alpha d_\beta}
    \Big[
        \Tr(\Pi_\alpha O)\Tr(\Pi_\beta O)
        (\Pi_\alpha\otimes\Pi_\beta)
        \nonumber\\
        &\qquad \qquad \qquad \quad+
        \Tr(\Pi_\alpha O\Pi_\beta O)
        D_{\boldsymbol{A}_{\alpha\beta,s}}
    \Big]\ . \notag
\end{align}

Altogether, the monochromatic sector contributes $m$ two-dimensional blocks,
while the mixed-color sector contributes four one-dimensional blocks for each
pair of distinct colors. Hence,
\begin{equation}
    \dim\operatorname{Com}_2\!\left(\mathcal{U}(\HH|\textbf{M})\right)
    =
    2m+4\binom{m}{2}
    =
    2m^2\ ,
\end{equation}
in agreement with
$\dim\operatorname{Com}_k(\mathcal{U}(\HH|\textbf{M}))=k!m^k$.

We now specialize the second-order channel to the energy eigenbasis. For
$E_i\in E_I^\Delta$ and $E_j\in E_J^\Delta$,
\begin{equation}
    \overline{O^U_{ij}O^U_{ji}}
    =
    \langle E_i\otimes E_j|
    \Phi_{\mathcal{U}(\HH|\textbf{M})}^{(2)}(O^{\otimes2})
    |E_j\otimes E_i\rangle \ .
\end{equation}
The cases $I=J$ and $I\neq J$ select the monochromatic and mixed-color
channels, respectively.

For $I\neq J$, only the swap contribution of the mixed-color channel is
non-vanishing, giving
\begin{equation}
    \overline{O^U_{ij}O^U_{ji}}
    =
    \frac{\Tr(\Pi_I O\Pi_J O)}{d_Id_J}\ ,
    \qquad I\neq J\ .
\end{equation}
For $I=J$ and $i\neq j$, only the swap output of the monochromatic channel
contributes, and
\begin{equation}
    \overline{O^U_{ij}O^U_{ji}}
    =
    \frac{
        \Tr(\Pi_I O\Pi_I O)
        -
        d_I^{-1}\Tr(\Pi_I O)^2
    }{d_I^2-1}\ .
\end{equation}
For $i=j$, both monochromatic outputs contribute,
\begin{equation}
    \overline{O^U_{ii} O^U_{ii}}
    =
    \frac{
        \Tr(\Pi_I O)^2
        +
        \Tr(\Pi_I O\Pi_I O)
    }{d_I(d_I+1)}\ .
\end{equation}

Introducing the normalized projectors
and defining
\begin{align}
    K_2(E_I,E_J)
    &=
    \Tr(\hat{\Pi}_I O\hat{\Pi}_J O)
    -
    \frac{\delta_{IJ}}{d_I}K_1(E_I)^2\ ,
\end{align}
the exact finite-dimensional results can be written as
\begin{align}
    \overline{O^U_{ij}O^U_{ji}}
    &=
    K_2(E_I,E_J)\ ,
    \qquad \quad \qquad   I\neq J\ ,
    \\
    \overline{O^U_{ij}O^U_{ji}}
    &=
    \frac{d_I^2}{d_I^2-1}K_2(E_I,E_I) \ ,
    \qquad I=J,\ i\neq j\ ,
    \\
    \overline{O^U_{ii}O^U_{ii}}
    &=
    K_1(E_I)^2
    +
    \frac{d_I}{d_I+1}K_2(E_I,E_I)\ .
\end{align}

In the asymptotic $d_I\gg1$, the finite-dimensional Weingarten
corrections approach unity. Combining the first- and second-order results,
we obtain
\label{eq:ETH_ansatz_app}
\begin{align}
    \overline{O^U_{ij}O^U_{ji}}
    &\simeq
    K_1(E_I)^2\delta_{ij}
    +
    K_2(E_I,E_J)\ ,
\end{align}
with $E_i\in E_I^\Delta$ and $E_j\in E_J^\Delta$, reproducing
Eq.~\eqref{eq_ETH2}.
\end{document}